\newcommand\defaccr[2]{\newcommand#1{#2\xspace}}
\newcommand\defmath[2]{\newcommand#1{\ensuremath{#2}\xspace}}
\newcommand\concept[1]{\textit{#1}}
\renewcommand\dots{\makebox[.7em][c]{.\hfil.\hfil.}}
\defmath\PG{\mathcal{P}_n}
\def\namedlabel#1#2{\begingroup
    #2%
    \def\@currentlabel{#2}%
    \phantomsection\label{#1}\endgroup
}
\patchcmd{\ALG@step}{\addtocounter{ALG@line}{1}}{\refstepcounter{ALG@line}}{}{}
\newcommand{\ALG@lineautorefname}{Line}
\renewcommand\phi{\varphi}
\defmath\leqpoly{\leq_\poly}
\defmath{\img}{\mathtt{image}}
\defmath{\apre}{\mathtt{\forall preimage}}
\defmath{\Post}{\mathit{Postfix}}
\let\set\undefined
\providecommand{\set}[1]{\ensuremath{\left\lbrace #1 \right\rbrace}}
\providecommand{\sizeof}[1]{\ensuremath{\left\vert{#1}\right\vert}}
\providecommand{\gen}[1]{\ensuremath{\left\langle #1 \right\rangle}}
\defmath{\bool}{\ensuremath{\mathbb{B}}}
\defmath{\complex}{\ensuremath{\mathbb{C}}}
\defmath{\integers}{\ensuremath{\mathbb{Z}}}
\defmath{\conditionalind}{\mathrel{\text{\scalebox{1.07}{$\perp\mkern-10mu\perp$}}}}
\defmath{\dx}{\partial x}
\defmath{\ddx}{\sfrac{\partial}{\partial x}}
\defmath{\half}{\textstyle{\frac{1}{2}}}
\defmath\Exists{\mathit{Exists}}
\defmath\PlusExists{\mathit{PlusExists}}
\defmath\calciso{\mathsf{calciso}}
\newcommand{\defn}{\,\triangleq\,}
\tikzstyle{oval} = [state, ellipse, minimum size=4mm, inner sep=0.5mm, node distance=1cm]
\tikzset{every picture/.style={->,thick}}
\tikzstyle{leaf}=[draw, rectangle,minimum size=5mm, inner sep=3pt]
\tikzstyle{var}=[circle,draw=black!70,solid,thick,minimum size=6mm]
\tikzstyle{bdd}=[regular polygon, regular polygon sides=3, draw=black!70,solid,thick,inner sep=0.5mm]
\tikzstyle{n}=[->,loosely dashed,thick]
\tikzstyle{p}=[->,solid,thick]
\tikzstyle{b}=[->,densely dashdotted,ultra thick]
\defmath\before{\prec}
\defmath\beforeq{\preccurlyeq}
\newenvironment{smallmat}{\left[\begin{smallmatrix}}{\end{smallmatrix}\right]}
\newcommand{\Pbf}{\mathbf{P}}
\defaccr{\bdd}{\textsf{BDD}}
\defaccr{\bdds}{\textsf{BDD}s}
\defmath{\qmdd}{\textsf{SLDD}_{\times}}
\defaccr{\add}{\textsf{ADD}}
\defaccr{\isoqmdd}{\textsf{LIMDD}}
\defaccr{\limdd}{\textsf{LIMDD}}
\defaccr{\zlimdd}{\ensuremath{\braket{Z}}-\textsf{LIMDD}}
\defaccr{\paulilimdd}{Pauli-\limdd}
\defaccr{\paulilimdds}{Pauli-\limdds}
\defaccr{\qmdds}{\textsf{QMDD}s}
\defaccr{\adds}{\textsf{ADD}s}
\defaccr{\isoqmdds}{\textsf{LIMDD}s}
\defaccr{\limdds}{\textsf{LIMDD}s}
\defaccr{\glimdd}{\ensuremath{G}-\limdd}
\defaccr{\glimdds}{\ensuremath{G}-\limdds}
\defmath\oh{\mathcal O}
\defmath\rootlim{B_{\textnormal{root}}}
\defmath\lowlim{B_{\textnormal{low}}}
\defmath\highlim{B_{\textnormal{high}}}
\defmath\gmax{g}
\defmath\kmax{\kappa^{\textnormal{final}}}
\defmath\cast{\mathbb C^\ast}
\DeclareRobustCommand{\leafnode}[1][]{%
  \raisebox{-.8mm}{%
  \tikz{%
    \node[state,inner sep=0pt,minimum size=10pt,right= of x,leaf](v){\scriptsize $1$};%
  }%
  }%
}
\defmath\Low{\ensuremath{\textsf{low}}}
\defmath\High{\ensuremath{\textsf{high}}}
\defmath\LIM{\textsf{LIM}}
\tikzstyle{oval} = [state, ellipse, minimum size=4mm, inner sep=0.5mm, node distance=1cm]
\tikzset{every picture/.style={->,thick}}
\tikzstyle{leaf}=[draw, rectangle,minimum size=4.mm, inner sep=3pt]
\tikzstyle{var}=[circle,draw=black!70,solid,thick,minimum size=6mm]
\tikzstyle{bdd}=[regular polygon, regular polygon sides=3, draw=black!70,solid,thick,inner sep=0.5mm]
\tikzstyle{n}=[->,loosely dashed,thick]
\tikzstyle{p}=[->,solid,thick]
\tikzstyle{b}=[->,densely dashdotted,ultra thick]
\tikzset{every node/.style={initial text={}, inner sep=2pt, outer sep=0}}
\tikzstyle{e0}[0]=[dashed,thick,bend right=#1]
\tikzstyle{e1}[0]=[solid, bend left =#1]
\tikzstyle{lbl}=[draw,fill=white,inner sep=2pt, minimum size=0cm,line width=.5pt]
\defmath\yy{\begin{smallmat}
    0 & y^*\\
    y & 0\\
\end{smallmat}}
\defmath\ww{\begin{smallmat}
      0 & y   \\
      y^* & 0  \\
  \end{smallmat}
}
\tikzstyle{e0}[0]=[dotted,bend right=#1]
\tikzstyle{e1}[0]=[solid, bend left =#1]
\defmath\hv{{\hat v}}
\tikzset{every node/.style={initial text={}, inner sep=2pt, outer sep=0}}
\defmath\expsep{\succ\hspace{-1.5mm}\succ}
\defmath{\sumstate}{\ket{\text{Sum}}}
\defmath{\atleastassuccinctas}{\preceq_s}
\defmath{\strictlymoresuccinctthan}{\prec_s}
\defmath{\notmoresuccinctthan}{\npreceq_{s}}
\newcommand\hide[1]{}
\def\strictsuccinctto{
    \setbox0\hbox{
            $\longrightarrow$
    }\copy0\llap{\raise\ht0\hbox{
    {
    $    \hspace{0mm}\mathclap{\longleftarrow}{\hspace{-1.5mm}\times}\hspace{0mm}$
    }
    }}
}
\defmath\samp{\textbf{Sample}}
\defmath\pro{\textbf{Measure}}
\defmath\eq{\textbf{Equal}}
\defmath\res{\textbf{Res}}
\defmath\addi{\textbf{Addition}}
\defmath\inprod{\textbf{InnerProd}}
\defmath\fid{\textbf{Fidelity}}
\defmath\had{\textbf{Hadamard}}
\defmath\xyz{\textbf{X,Y,Z}}
\defmath\cx{\textbf{CX}}
\defmath\cz{\textbf{CZ}}
\defmath\swap{\textbf{Swap}}
\defmath\loc{\textbf{Local}}
\defmath\T{\textbf{T}}
\defmath\Rot{\mathit{Rot}}
\newcommand{\h}{\mathcal{H}}
\defmath\init{\mathit{init}}
\defmath{\qg}{\mathbf{G}}
\defmath\true{1}
\defmath\false{0}
\newcommand\no[1]{\ensuremath{\overline{#1}}}
\begin{document}

\title{Simulating~Quantum~Circuits~by~Model~Counting}

\authorrunning{J. Mei, M. Bonsangue \& A. Laarman}
\author{Jingyi Mei, Marcello Bonsangue and Alfons Laarman}
\institute{Leiden University}

\maketitle             
\begin{abstract}

    Quantum circuit compilation comprises many computationally hard reasoning tasks that 
    nonetheless lie inside \#$\P$ and its decision counterpart in~$\PP$.
    The classical simulation of general quantum circuits is a core example.
    We show for the first time that a strong simulation of universal quantum circuits can be efficiently tackled through weighted model counting by providing a linear encoding of Clifford+T circuits.
    To achieve this, we exploit the stabilizer formalism by Knill, Gottesmann, and Aaronson and the fact that stabilizer states form a basis for density operators.
    With an open-source simulator implementation, we demonstrate empirically that model counting often outperforms state-of-the-art simulation techniques based on the ZX calculus and decision diagrams.
    Our work paves the way to apply the existing array of powerful classical reasoning tools to realize efficient quantum circuit compilation; one of the obstacles on the road towards quantum supremacy.
\end{abstract}

\keywords{Quantum Computing \and Quantum circuit simulation \and Satisfiability \and \#SAT \and Weighted Model Counting \and Stabilizer Formalism.\vspace{-1em}}

\section{Introduction}

Classical simulation of quantum computing~\cite{viamontes2003improving,Anders2006fast,zulehner2018one,burgholzer2020improved} serves as a crucial task in the verification~\cite{ardeshir2014verification,thanos2023fast,peham2022equivalence,hong2022equivalence}, 
synthesis~\cite{brand2023quantum,schneider2023sat,amy2023symbolic} and optimization~\cite{Bravyi2021cliffordcircuit,shaik2023optimal} of quantum circuits.
In addition, improved classical simulation methods aid the search for a quantum advantage~\cite{huang2020classical,arute2019quantum}.

Due to the inherent exponential size of the underlying representations of quantum states and operations,
classical simulation of quantum circuits is a highly non-trivial task that comes in two flavors~\cite{estimation2020pashayan}:
Weak simulators only sample the probability distribution over measurement outcomes, i.e, they implement the ``bounded-error'' \BQP-complete problem that a quantum computer solves, whereas strong circuit simulators can compute the amplitude of any basis state, solving a \#\P-complete problem~\cite{van2010classical,jozsa2013classical}.

SAT-solvers~\cite{biere2009handbook,feng2023verification} have shown great potential to solve many formal methods problems efficiently in practice. %
Likewise, for the (weighted) model counting problem, i.e, counting number (or weights) of satisfying assignments,
many tools can manage large sets of constraints within industrial applications~\cite{oztok2015atopdown,sang2004combiningcc},
despite the \#P-Completeness of the problem.
Because strong quantum circuit simulation is \#\P-complete, weighted model counters are a natural fit that nonetheless has not yet been exploited.
To do so, we provide the first encoding of the strong simulation problem of universal quantum circuits
as a weighted model counting problem. Perhaps surprisingly, this encoding is linear in the number of gates or qubits, i.e., $\mathcal O(n+m)$ for $m$ gates and $n$ qubits.

One of the key properties of our encoding is that it does not use complex numbers to encode the probability amplitudes of quantum states ---which would prohibit the use of all modern model counters--- but merely requires negative weights.
We achieve this by exploiting a generalization~\cite{efficient2021zhang} of the Gottesman-Knill theorem~\cite{gottesman1997stabilizer}, which in effect rewrites the density matrix of any quantum state in the stabilizer state basis~\cite{gay2011stabilizer}, thereby obviating the need for complex numbers.
It turns out many exact weighted model counters do support negative weights out of the box. Our encoding thus empowers them to reason directly about  constructive interference, which is the cornerstone of all quantum algorithms.

Like many, our method builds on the Solovay–Kitaev theorem~\cite{dawson2005solovaykitaev} that in particular shows that the Clifford+$T$ gate set is universal for quantum computing, meaning that this gate set can efficiently approximate any unitary operator. Since our encoding also supports arbitrary rotation gates (phase shift $P$, $R_x$, $R_y$ and $R_z$), and since these rotations are non-Clifford gates, it can also support other universal gate sets like Clifford+$R_x$ or Clifford+$P$,
which allows for easier approximations.
One important example is Quantum Fourier Transformation (QFT),
which can be simulated with $O(n)$ rotation gates while needs at least $O(n\log(n))$ T gates to approximate~\cite{approximate2020nam}.
Moreover, since the hardness of exact reasoning about quantum circuits depends on the gate set (the classes \EQP{} and \NQP{} are  parameterized by the gate set as it defines the realizable unitaries), this flexibility greatly increases the strength of our strong simulation approach.

We implement our encoding in an open source tool QCMC, which uses the weighted model-counting tool GPMC~\cite{hashimoto2020gpmc}.
We demonstrate the scalability and feasibility of the proposed encoding through experimental evaluations based on three classes of benchmarks: random Clifford+$T$ circuits mimicking quantum chemistry applications~\cite{wright2022chemistry} and oracles, and quantum algorithms from MQT bench~\cite{mqt2023quetschlich}. 
We compare the results of our method against state-of-the-art circuit simulation tools QuiZX \cite{kissinger2022simulating} and Quasimodo~\cite{quasimodo}, respectively based on the ZX-calculus~\cite{coecke2011interactingZXAlgebra} and CFLOBDD~\cite{sistla2023weighted}. 
QCMC simulates important quantum algorithms like QAOA, W-state and VQE, etc,
which are not supported by QuiZX.
Additionally, QCMC outperforms Quasimodo on almost all random circuits and uses orders of magnitudes less memory than Quasimodo on all benchmarks.

In sum, this paper makes the following contributions.

\begin{itemize}[noitemsep,topsep=0pt,parsep=0pt,partopsep=0pt]
    \item a generalized stabilizer formalism formulated in terms of stabilizer groups, which forms a basis for our encoding;
    \item  the first encoding for circuits in various universal quantum gate sets as a weighted model counting problem, which is also linear in size;
    \item an implementation based on the weighted model counting tool GPMC;
    \item new benchmarks for the WMC competition~\cite{modelcounting2023}, and insights on improving model counters and samplers for applications in quantum computing.
\end{itemize}

\section{Preliminaries}
\label{sec2}
\label{sec:preliminaries}

\subsection{Quantum Computing}\label{sec21}

Similar to bits in classical computing, 
a quantum computer operates on quantum bits, or short as \emph{qubits}.
A bit is either $0$ or $1$, while
a qubit has states $\ket{0}$ or $\ket{1}$.
Here `$\ket{ }$' is the \emph{Dirac notation}, standard in quantum computing,
standing for a unit column vector, i.e.,
 $\ket{0} = [1,0]^\mathrm{T}$ and $\ket{1} = [0,1]^\mathrm{T}$, while $\bra{\psi}$ denotes the complex conjugate and transpose of $\ket{\psi}$, that is a row vector: $\bra{\psi} = \ket{\psi}^\dag$.

Let $\h$ be a Hilbert space.
A $n$-qubit quantum state is a $2^n$-dimension unit column vector in $\h$.
In the case of $n = 1$, a pure state $\ket{\psi}$ is  written as $\ket{\psi} = \alpha\ket{0} + \beta\ket{1}$, where $\alpha$ and $\beta$ are complex numbers in $\complex$ satisfying $|\alpha|^2 + |\beta|^2 = 1$.
Sometimes we represent a pure quantum state $\ket{\psi}$ by the density operator obtained as the product $\op{\psi}{\psi}$ of the state with itself.
For the rest of the paper,
we fix $n$ to be the number of the qubits.

Operations on quantum states are given by \emph{quantum gates}.
For an $n$-qubit quantum system,
a (global) quantum gate is a function $\mathbf{G}: \h\rightarrow\h$,
which can be described by $2^n\times 2^n$ \emph{unitary matrix} $U$,
i.e., with the property that $UU^\dag = U^\dag U = I$.
A quantum gate is \emph{local} when it works on a subspace of quantum system,
which can be extended to a \concept{global quantum gate} by apply identity operators on unchanged qubits,
i.e. a local quantum gate $U$ on qubit $i$ can be represented as a global quantum gate $U_i = \underbrace{I\otimes\cdots \otimes I}_{i \textup{ terms}} \otimes\,\ U \otimes \underbrace{I \otimes \cdots\otimes I}_{n-i-1\textup{ terms}}$.
Examples of quantum gates are the following $2 \times 2$ \emph{Pauli matrices} (or \emph{Pauli gates}):
\[
  \begin{aligned}
    \sigma[00] \equiv \,I\, \equiv
    \begin{bmatrix}
      1 & 0 \\
      0 & 1
    \end{bmatrix},
    \sigma[01] \equiv Z \equiv
    \begin{bmatrix*}[r]
      1 & 0 \\
      0 & -1
    \end{bmatrix*},
    \sigma[10] \equiv X \equiv
    \begin{bmatrix*}[r]
      0 & 1 \\
      1 & 0
    \end{bmatrix*},
    \sigma[11] \equiv Y \equiv
    \begin{bmatrix*}[r]
      0 & -\dot{\imath} \\
      \dot{\imath} & 0
    \end{bmatrix*}.
  \end{aligned}
\]
Pauli matrices form a basis for  $2 \times 2$ Hermitian matrix space. 
Let $\mathrm{PAULI}_n$ be the set of the tensor product of $n$ Pauli operators (a ``Pauli string'').
Any density operator in an $n$-qubit system can be written as $\sum_i \alpha_i P_i$ where $P_i\in \mathrm{PAULI}_n$ and $\alpha_i$ are real numbers~\cite{gay2011stabilizer}.
The so-called Pauli group is generated by multiplication of the local Pauli operators $X,Z$ as follows: $\PG = \gen{X_0, Z_0 \dots, X_{n-1}, Z_{n-1}}$.
Structurally, the Pauli group can now be written as $\PG = \set{k \cdot P  \mid P\in \mathrm{PAULI}_n , k \in \set{\pm 1, \pm\dot{\imath}} }$.
For instance, we have $-X\otimes Y \otimes Z \in \mathcal{P}_3$.

The evolution of a quantum system can be modeled by a quantum circuit.
Let $[m]$ be $\{0,\dots, m-1\}$.
A quantum circuit is given by a sequence of quantum gates: $C\equiv \mathbf{G}^0 \cdots\mathbf{G}^{m-1}$, where $\mathbf{G}^t \in \mathcal H$ is a global quantum gate at time step $t\in [m]$.
Let $U^t$ be the unitary matrix for $\qg^t$.
Then $C$ is represented by the unitary matrix $U = U^{m-1}\cdots U^0$.

An important class of quantum circuits is the so-called Clifford class or group,
as they can describe interesting quantum mechanical phenomena such as entanglement, teleportation, and superdense encoding.
More importantly, they are widely used in quantum error-correcting codes~\cite{calderbank1996quantum,steane1996error}
and measurement-based quantum computation~\cite{raussendorf2001oneway}.
The Clifford group is the set of unitary operators that map the Pauli group to itself through conjugation,
i.e. all the $2^n \times 2^n$ unitary matrices $U$ such that $UPU^\dag\in \PG$ for all $P\in\PG$.
It is generated by the local Hadamard ($H$) and phase ($S$) gates, 
and the two-qubit control-not gate~($CX$,~$CNOT$):
\[
  \begin{aligned}
    H = \frac{1}{\sqrt{2}}
    \begin{bmatrix*}[r]
      1 & 1 \\
      1 & -1
    \end{bmatrix*},
  \quad
  S =
  \begin{bmatrix}
    1 & 0 \\
    0 & \dot{\imath}\
  \end{bmatrix},
  \quad
  \text{and }
  CX = 
  \begin{smallmat}
    1 & 0 & 0 & 0 \\
    0 & 1 & 0 & 0 \\
    0 & 0 & 0 & 1 \\
    0 & 0 & 1 & 0 \\    
  \end{smallmat}.
  \end{aligned}
\]
Recall that $U_j$ performs $U$ on $j$-th qubit.
Similarly, we denote 
 by $CX_{ij}$ the unitary operator taking the $i$-th qubit as the control qubit and $j$-qubit as the target to execute a controlled-not gate.
Clifford circuits are circuits only containing gates from the Clifford group.

A (projective) measurement is given by a set of \emph{projectors} $\{\mathbb{P}_0,\dots,\mathbb{P}_{k-1}\}$ ---one for each measurement outcome $[k]$--- satisfying $\sum_{j\in[k]}\mathbb{P}_j = I$.
A linear operator $\mathbb{P}$ is a projector if and only if $\mathbb{P}\mathbb{P}=\mathbb{P}$.
For example,
given a 3-qubit system,
measuring the first 2 qubits under computational basis is given by projective measurement
$\set{~~
   \op{0}{0}\otimes\op{0}{0}\otimes I,~~
   \op{1}{1}\otimes\op{0}{0}\otimes I,~~
   \op{0}{0}\otimes\op{1}{1}\otimes I,~~
   \op{1}{1}\otimes\op{1}{1}\otimes I
   ~~}$.

Weak simulation is the problem of sampling the measurements outcomes according to the probability distribution induced by the semantics of the circuit.
In this work, we focus on strong simulation as defined in \autoref{def:simulation}.
Here we assume, without loss of generality, that a circuit is initialized to the all-zero state: $\ket{0}^{\otimes n}$.

\begin{definition}[The strong simulation simulation]\label{def:simulation}
Given an $n$-qubit quantum circuit $C$ and a measurement  $M=\{\mathbb{P}_0,\dots,\mathbb{P}_{k-1}\}$,
a strong simulation of circuit $C$ computes the probability of getting any outcome $l\in [k]$,
that is, the value $\bra{0}^{\otimes n}C^\dag\mathbb{P}_l C\ket{0}^{\otimes n}$, up to a  number of  desired bits of precision.
\end{definition}

The Gottesman-Knill theorem~\cite{gottesman1997stabilizer} shows that  Clifford circuits can be strongly stimulated by classical algorithms in polynomial time and space.

\subsection{Stabilizer Groups}\label{sec22}

The stabilizer formalism~\cite{gottesman1997stabilizer} is a subset of quantum computing that can be effectively simulated on a classical computer.
A state $\ket{\varphi}$ is said to be \emph{stabilized} by a quantum unitary operator $U$ if and only if it is a +1 eigenvector of $U$, i.e., $U\ket{\varphi} = \ket{\varphi}$. 
For example, we say $\ket 0$ is \concept{stabilized} by $Z$ as $Z\ket 0 = \ket 0$. Similarly, $\ket{+}=  \frac{1}{\sqrt{2}}\ket{0} + \frac{1}{\sqrt{2}}\ket{1}$ is stabilized by $X$, and all states are stabilized by $I$. 
The \emph{stabilizer states} form a strict subset of all quantum states which can be uniquely described by maximal commutative subgroups of the Pauli group $\PG$, which is called \emph{stabilizer group}.
The elements of the stabilizer group are called \emph{stabilizers}.
Recall that the Clifford group is formed by unitary operators mapping the Pauli group to itself.
This leads to the fact that stabilizer states are closed under operators from the Clifford group.

Given an $n$-qubit stabilizer state $\ket{\psi}$,
let $\mathcal{S}_{\ket{\psi}}$ be the stabilizer group of $\ket{\psi}$.
While the elements of a Pauli group $\PG$ either commute or anticommute, 
a stabilizer group $\mathcal{S}$ must be abelian, 
because if $P_1, P_2\in\mathcal{S}_{\ket{\phi}}$ anticommute, i.e, $P_1P_2 = -P_2P_1$,
there would be a contradiction: $\ket{\phi} = P_1P_2\ket{\phi} = -P_2P_1\ket{\phi} = - \ket{\phi}$.
In particular, $-I^{\otimes n}$ can never be a stabilizer.
In fact,
a subgroup $\mathcal{S}$ of $\PG$ is a stabilizer group for an $n$-qubit quantum state if and only if it is an abelian group without $-I^{\otimes n}$.
Therefore, for any  Pauli string $\mathbf{P}\in \mathrm{PAULI}_n$, if $k\mathbf{P}\in\mathcal{S}$, we have $k=\pm 1$, since
$\forall k\mathbf{P}\in\mathcal{S}: \quad (k\mathbf{P})\ket{\phi} = (k\mathbf{P})^2\ket{\phi} = k^2 I \ket{\phi} = k^2\ket{\phi} \Rightarrow k = \pm 1.$
For clarity, we will use the symbols $P$ for Pauli strings with weight and use symbol $\mathbf{P}$ for Pauli strings without weight, i.e. $\mathbf{P}\in \mathrm{PAULI}$.

As every subgroup of a free group is free, 
any stabilizer group $\mathcal{S}$ can be specified by a set of generators $\mathcal{G}$ so that every element in $\mathcal{S}$ can be obtained through matrix multiplication on $\mathcal{G}$,
denoted as $\gen{\mathcal{G}}= \mathcal{S}$. 
The set of generators $\mathcal{G}$ needs not to be unique and has order $|G| = n$,
where $n$ represents the number of qubits,
and the corresponding stabilizer group $\mathcal{S}$ has order $2^{|G|}$. 

\begin{example}\label{ex21}
  The Bell state $\ket{\Phi_{00}} = \frac{1}{\sqrt{2}}(\ket{00} + \ket{11})$ can be represented by the following stabilizer generators written in square form:
    \[
   \pushQED{\qed}
      \frac1{\sqrt 2}(\ket{00} + \ket{11}) \equiv 
                          \gen{\begin{matrix}
                          X\otimes X \\
                          Z\otimes Z \\
                          \end{matrix}} 
                          \equiv 
                          \gen{\begin{matrix*}[r]
                          X\otimes X \\
                         -Y\otimes Y \\
                          \end{matrix*}}.
   \qedhere
   \popQED
  \]
\end{example}

We can relate the (generators of the) stabilizer group directly to the stabilizer state $\ket{\psi}$, as
the density operator of the stabilizer state can be written in a basis of Pauli matrices as follows~\cite{entanglement2005toth}:
\begin{equation} \label{eq:densitymatrix}
    \op{\psi}{\psi} = \prod_{G\in\mathcal{G}_{\ket{\psi}}} \frac{I+G}{2} = 
    \frac{1}{2^n}\sum_{P\in\mathcal{S}_{\ket{\psi}}} P.
\end{equation}
If a Clifford gate $U$ is applied to the above state, i.e. $U\ket{\psi}$,
and let $P\in \mathcal{S}_{\ket{\psi}}$,
the corresponding stabilizers of $U\ket{\psi}$ can be obtained by $UPU^\dag$ since $UPU^\dag U\ket{\psi} = U\ket{\psi}$.
Thus we have $\mathcal{S}_{U\ket{\psi}} = \{UPU^\dag \ | \ P\in \mathcal{S}_{\ket{\psi}}\}$ and the density operator of the resulting state will be obtained by conjugating $U$ on $\op{\psi}{\psi}$,
i.e. $U\op{\psi}{\psi}U^\dag = \frac{1}{2^n}\sum_{P\in\mathcal{S}_{\ket{\psi}}} UPU^\dag$.
To be specific,
consider performing a Clifford gate $U_j$, denoting a single qubit gate $U$ applied to $j$-th qubit as given in previous section,
to a stabilizer $P = P_1\otimes \dots \otimes P_n$ where $P \in \mathcal{S}_{\ket{\psi}}$.
We have $U_jPU_j^\dag = P_1\otimes \dots \otimes UP_jU^\dag\otimes \dots \otimes P_n$.
Since $U$ is a Clifford gate, $UP_jU^\dag \in \PG$.  %
Thus only the sign $k$ and the $j$-th entry need to be updated,
which can be done in constant time following the rules in \autoref{tab:clifford} for $H$ and $S$.
Applying a two-qubit gate $CX_{ij}$ is similar to updating the sign and Pauli matrices in $i$-th and $j$-th position (see \autoref{tab:clifford}),
which also takes constant time.
Overall, updating all generators after performing one Clifford gate can be done in $O(n)$ time.

\begin{table}[t!]
  \centering
  \caption{Lookup table for the action of conjugating Pauli gates by T gates. The subscripts ``c'' and ``t'' stand for ``control" and ``target". Adapted from~\cite{garcia2014geometry}.}
  \label{tab:clifford}
  \setlength{\tabcolsep}{5pt} 
  \begin{tabularx}{0.65\textwidth}{c|rr||c|cc}
      \toprule
      \textbf{Gate} & \textbf{In} & \textbf{Out} & \textbf{Gate} & \textbf{In} & \textbf{Out} \\
      \midrule
      & $X$ & $Z$ & \multirow{6}{*}{CX} & $\phantom{-}I_c \otimes X_t$ & $\phantom{-}I_c\otimes X_t$ \\
      $H$ & $Y$ & $-Y$ & & $\phantom{-}X_c \otimes I_t$ & $\phantom{-}X_c \otimes X_t$ \\
      & $Z$ & $X$ & & $\phantom{-}I_c \otimes Y_t$ & $\phantom{-}Z_c \otimes Y_t$ \\
      \cline{1-3}
       & $X$ & $Y$ & & $\phantom{-}Y_c \otimes I_t$ & $\phantom{-}Y_c \otimes X_t$ \\
      $S$ & $Y$ & $-X$ & & $\phantom{-}I_c \otimes Z_t$ & $\phantom{-}Z_c \otimes Z_t$ \\
      & $Z$ & $Z$ & & $\phantom{-}Z_c \otimes I_t$ & $\phantom{-}Z_c \otimes I_t$ \\
      \bottomrule
  \end{tabularx}
  \vspace*{-1em} 
\end{table}

Although stabilizer states are a subset of quantum states,
the space of density matrices for $n$-qubit states has a basis consisting of density matrices of $n$-qubit stabilizer states~\cite{gay2011stabilizer},
which makes it possible to extend the stabilizer formalism to a general quantum state.
\begin{example}\label{ex20}
  The Bell state $\ket{\Phi_{00}} = \frac{1}{\sqrt{2}}(\ket{00} + \ket{11})$ is a stabilizer state, 
  as it can be obtained by the following circuit, which evaluates to $CX_{01} \cdot H_0 \cdot \ket{00} = \ket{\Phi_{00}}$:
  \[
    \begin{array}{c}  
      \Qcircuit @C=1em @R=.7em {
        \lstick{\ket{0}} & \qw\ar@{.}[]+<0em,1em>;[d]+<0em,-0.5em> & \gate{H} &\qw\ar@{.}[]+<0em,1em>;[d]+<0em,-0.5em> & \ctrl{1} & \qw\ar@{.}[]+<0em,1em>;[d]+<0em,-0.5em> &\qw \\
        \lstick{\ket{0}} & \qw & \qw & \qw & \targ & \qw &\qw\\
        & \ket{\varphi_0} & & \ket{\varphi_1} & & \ket{\varphi_2} &
      }
  \end{array}
  \]
  We can simulate the above circuit with the stabilizer formalism.
  The stabilizer generator set for each time step can be obtained using the rules shown in \autoref{tab:clifford}:
  \[
  \begin{aligned}
      \gen{\begin{matrix}
    Z\otimes I \\
    I \otimes Z
    \end{matrix}} 
    &\xrightarrow{  H_0} 
    \gen{\begin{matrix}
      HZH^\dag \otimes I \\
      HIH^\dag \otimes Z
      \end{matrix}} 
      =
    \gen{\begin{matrix}
        X \otimes I \\
        I \otimes Z
      \end{matrix}} 
      \xrightarrow{ CX_{0,1}} 
      \gen{\begin{matrix}
        CX(X \otimes I)CX^\dag \\
      CX (I \otimes Z)CX^\dag
        \end{matrix}}
        =
        \gen{\begin{matrix}
      X \otimes X \\
      Z \otimes Z
      \end{matrix}}.
  \end{aligned}
    \]

    \vspace{-2.3em} \qed
\end{example}

According to the Gottesmann-Knill Theorem,
a quantum circuit can be effectively simulated when it starts from a stabilizer state and contains exclusively Clifford gates~\cite{gottesman1997stabilizer}. 
Because of the small number of generators needed, 
stabilizer groups can be used for an effective classical simulation of quantum states. 
In a $2^n$-dimensional Hilbert space,
instead of describing a quantum state by a $2^n$ complex-valued vector, one can use its stabilizer generator set to indicate the quantum state.

It's important to note that Clifford gates don't constitute a universal set of quantum gates. 
Certain unitary operators, 
like the $T$ gate represented by $T = \op{0}{0} + e^{\dot{\imath}\pi/4}\op{1}{1}$, 
fall outside this set.
However, by augmenting the set of Clifford gates with the $T$ gate, 
it becomes possible to approximate any unitary operator with arbitrary accuracy, 
as shown in~\cite{Kliuchnikov2013synthesis,bravyi2016improved}.

\subsection{Weighted Model Counting}

Weighted model counters can solve probabilistic reasoning problems like Bayesian inference~\cite{chavira2008probabilistic}.
In these ``classical'' settings, the weights in the encoding of e.g. a Bayesian network represent positive probabilities. 
The quantum setting that we study here is well-known for its complex probabilities, called amplitudes.
However, by expressing (pure) quantum states as their density matrices, and rewriting those in the stabilizer basis~\cite{gay2011stabilizer}, we obviate the need for complex numbers, as shown in the next section.
It turns out that existing weighted model counting tools, like GPMC~\cite{hashimoto2020gpmc}, already support negative weights %
(see \autoref{S5}).

Let $\bool$ be the Boolean set $\set{0,1}$. A propositional formula $C \colon \bool^V \to \bool$ over a finite set of Boolean variables $V$ is \emph{satisfiable} if there is an assignment $\alpha \in \bool^V$ such that $C(\alpha) = \true$. We define the set of all satisfiable assignments of a propositional formula $C$ as $SAT(C):= \set{\alpha \mid \alpha(C) = 1}$.
We write an assignment $\alpha$ as a \concept{cube} (a conjunction of literals, i.e. positive or negative variables), e.g., $a\land b$, or shorter~$ab$.

A weight function $W\colon \set{ \no v, v \mid v\in V} \to \mathbb{R} $ assigns to the positive literal $v$ (i.e., $v = \true$) and the negative literal $\no v$
(i.e., $v = \false$)
a real-valued weight.
We say variable $v$ is \emph{unbiased} iff $W(v) = W(\no v) = 1$.
Given an assignment $\alpha\in\bool^V$,
let $W(\alpha(v)) = W(v = \alpha(v))$ for $v\in V$.

For a propositional formula $C$ over variables in $V$ and weight function $W$, 
we define \concept{weighted model counting} as follows.
\[
MC_W(C) \defn   \sum_{\alpha \in  \bool^V} C(\alpha)\cdot W(\alpha)\text{, where }  W(\alpha) = \prod_{v\in V} W(\alpha(v)).
\]

\begin{example}
  Given a propositional formula $C = v_1 v_2 \vee \no{v_1} v_2 \vee v_3$ over $V=\{v_1,v_2, v_3\}$, there are two satisfying assignments: $\alpha_1 = v_1  v_2  v_3$ and $\alpha_2 = \no{v_1}  v_2  v_3$. 
  We define the weight function $W$ as $W(v_1) = -\frac{1}{2}$, $W(\no{v_1}) = \frac{1}{3}$ and $W(v_2) = \frac14$, $W(\no{v_2}) = \frac34$, while $v_3$ remains unbiased.
  The weight of $C$ can be computed as $MC_W(C) = -\frac{1}{2} \times \frac14  \times 1 + \frac{1}{3}  \times \frac14  \times 1  = -\frac{1}{24}$.
  \qed
\end{example}

\section{Encoding Quantum Circuits as Weighted CNF}
\label{sec3}
\label{sec:encoding}

\subsection{Generalized stabilizer formalism}

Clifford circuits together with $T$ gates generate states beyond stabilizer states,
enabling universal quantum computation.
As is shown in \autoref{tab:clifford},
Clifford gates map the set of Pauli matrices to itself,
keeping stabilizers within the Pauli group.
In contrast, 
$T$ gates can transform a Pauli matrix into a linear combination of Pauli matrices.
To be specific,
\autoref{tab:cliffordt} gives the action of $T$ gates on different Pauli gates.
\begin{table}[b!]
  \caption{Lookup table for the action of conjugating Pauli gates by $T$ gates. }
  \label{tab:cliffordt}
  \setlength{\tabcolsep}{15pt} 
  \centering
      \begin{tabularx}{0.63\textwidth}{c| c c c}
    \toprule
    \textbf{Gate} & \textbf{In} & \textbf{Out} & \textbf{Short} \\
    \midrule
      & $X$ &  $\frac1{\sqrt{2}} ({X + Y})$  & $\defn X'$ \\
    $T$ & $Y$ &  $\frac1{\sqrt{2}} ({Y - X})$& $\defn Y'$ \\
    & $Z$ & $Z$  &\\
    \bottomrule
  \end{tabularx}
\end{table}
Given a Pauli string, after performing Clifford+$T$ gates,
we will get a summation of weighted Pauli strings, e.g,
$T_1 \cdot (X\otimes X) \cdot T_1^\dag = \frac{1}{\sqrt{2}}(X\otimes X+Y \otimes X)$.
This leads to the definition of \emph{generalized stabilizer state} extended from standard stabilizer formalism.
\begin{definition}
    In a $n$-qubit quantum system,
    a generalized stabilizer state $\ket{\psi}$ is the simultaneous eigenvector, with eigenvalue 1, of a group containing $2^n$ commuting unitary operators $S$.
    The set of $S$ is a generalized stabilizer group.
\end{definition}

The above definition is adapted from \cite{efficient2021zhang} by defining a generalized stabilizer state using a generalized stabilizer group instead of generalized stabilizer generators.
In this way, we can easily get the corresponding stabilizer state by a weighted summation of its stabilizers to avoid multiplications between Pauli strings (the middle part of \autoref{eq:densitymatrix}).
The following proposition is also adapted from \cite{efficient2021zhang}, where they demonstrate that any pure state can be uniquely described by 
a set of stabilizers.
We additionally show that there exists a set of stabilizers, forming a group, which uniquely describes any pure state.

\begin{proposition}~\label{prop:pauli-basis}
    For any pure state $\ket{\varphi}$ in a $n$-qubit quantum system,
    there exists a generalized stabilizer group $\mathcal{S}_{\ket{\varphi}}$ such that
    $\op{\varphi}{\varphi} = \frac1{2^n} \cdot \sum_{P\in\mathcal{S}_{\ket{\varphi}}} P$.
\end{proposition}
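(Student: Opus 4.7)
The plan is to reduce the general case to the known stabilizer-state case (\autoref{eq:densitymatrix}) by transporting the all-zero stabilizer group along an arbitrary state-preparation unitary. Concretely, I would first pick any unitary $V$ on $\h$ that maps $\ket{0}^{\otimes n}$ to $\ket{\varphi}$; such a $V$ always exists by completing $\ket{\varphi}$ to an orthonormal basis of $\h$ and defining $V$ columnwise. Then I would set
\[
\mathcal{S}_{\ket{\varphi}} \defn V\,\mathcal{S}_{0}\,V^{\dag} \defn \set{VPV^{\dag}\mid P\in \mathcal{S}_{0}},
\]
where $\mathcal{S}_{0} = \gen{Z_{0},\dots,Z_{n-1}}$ is the (ordinary) stabilizer group of $\ket{0}^{\otimes n}$.

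The next step is to verify that $\mathcal{S}_{\ket{\varphi}}$ satisfies the requirements of the definition of a generalized stabilizer group. First, conjugation by $V$ is a group homomorphism on the unitary group, so $\mathcal{S}_{\ket{\varphi}}$ is itself a group of unitaries and $\abs{\mathcal{S}_{\ket{\varphi}}} = \abs{\mathcal{S}_{0}} = 2^{n}$. Second, commutativity is preserved by conjugation, so the $2^n$ elements pairwise commute. Third, every element stabilizes $\ket{\varphi}$ with eigenvalue $+1$: for any $P\in\mathcal{S}_{0}$,
\[
(VPV^{\dag})\ket{\varphi} \;=\; VPV^{\dag}V\ket{0}^{\otimes n} \;=\; VP\ket{0}^{\otimes n} \;=\; V\ket{0}^{\otimes n} \;=\; \ket{\varphi}.
\]

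Finally, for the density-operator formula, I would start from \autoref{eq:densitymatrix} applied to the genuine stabilizer state $\ket{0}^{\otimes n}$, namely $\op{0}{0}^{\otimes n} = \frac{1}{2^{n}}\sum_{P\in\mathcal{S}_{0}} P$, and conjugate both sides by $V$ to obtain
\[
\op{\varphi}{\varphi} \;=\; V\op{0}{0}^{\otimes n}V^{\dag} \;=\; \frac{1}{2^{n}}\sum_{P\in\mathcal{S}_{0}} VPV^{\dag} \;=\; \frac{1}{2^{n}}\sum_{P'\in\mathcal{S}_{\ket{\varphi}}} P',
\]
where the last equality uses that $P\mapsto VPV^{\dag}$ is a bijection from $\mathcal{S}_{0}$ onto $\mathcal{S}_{\ket{\varphi}}$.

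The approach is essentially frictionless; the one subtlety worth flagging, which is the main conceptual point of the generalization, is that $V$ need not be a Clifford unitary, so the conjugates $VPV^{\dag}$ are typically not Pauli strings but merely commuting unitaries. This is exactly why \autoref{prop:pauli-basis} cannot be proved inside the ordinary Gottesman--Knill stabilizer formalism and requires the relaxed notion of a generalized stabilizer group given before the statement.
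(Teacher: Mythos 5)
Your proof is correct and follows essentially the same route as the paper: both write $\ket{\varphi}=U\ket{0}^{\otimes n}$, define $\mathcal{S}_{\ket{\varphi}}$ as the conjugate of the stabilizer group of the all-zero state, check the $+1$-eigenvector property, and conjugate the density-matrix identity for $\op{0}{0}^{\otimes n}$. You actually spell out a few details the paper leaves implicit (existence of $V$, preservation of the group structure, cardinality, and commutativity), and your closing remark that the conjugates need not be Pauli strings matches the paper's follow-up observation that each generalized stabilizer is merely a real linear combination of Pauli strings.
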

\begin{proof}
    Given a pure state $\ket{\varphi}$, we have $\ket{\varphi} = U\ket{0}^{\otimes n}$,
    where $U$ is a unitary operator.
    Let $\mathcal{S}_{\ket{\varphi}} = \{ USU^\dag \mid S \in \mathcal{S}_{\ket{0}^{\otimes 0}} \}$,
    which is an isomorphic group to $\mathcal{S}_{\ket{0}}$ since $U$ is unitary.
    For any $S\in \mathcal{S}_{\ket{\varphi}}$, 
    we have $S'\in \mathcal{S}_{\ket{0}^{\otimes n}}$ satisfying $S = US'U^\dag$ and $S\ket{\varphi} = US'U^\dag U\ket{0}^{\otimes n} = US'\ket{0}^{\otimes n} = U\ket{0}^{\otimes n} = \ket{\varphi}$.
    Hence $S_{\ket{\varphi}}$ is the generalized stabilizer group of state $\ket{\varphi}$.
    Furthermore,
    we have $\op{\varphi}{\varphi} = U\op{0}{0}U^\dag = U(\frac{1}{2^n}\sum_{S'\in\mathcal{S}_{\ket{0}^{\otimes n}}} S')U^\dag = \frac{1}{2^n}\sum_{S'\in\mathcal{S}_{\ket{0}^{\otimes n}}} U S' U^\dag = \frac{1}{2^n} \cdot \sum_{S\in\mathcal{S}_{\ket{\varphi}}} S$.
    In fact, since any generalized stabilizer $S \in \mathcal{S}_{\ket{\varphi}}$ is a Hermitian matrix, as $S = US'U^\dag = US'^\dag U^\dag = S^\dag$ with $S'\in\mathcal{S}_{\ket{0}}$,
    it can be written as $S = \sum_i \alpha_i P_i$ where $P_i \in \textrm{PAULI}_n$ and $\alpha_i$ are real numbers satisfying $\sum_i \alpha_i^2 = 1$.
    Thus a generalized stabilizer is always a linear combination of stabilizers.
  \end{proof}

\defmath\GStab{\mathit{GStab}}
\defmath\stab{\mathit{Stab}}

For a $n$-qubit quantum space,
let $\GStab$ be the set of generalized stabilizer states, which is also the set of all pure states.
Let $\mathcal{Q}$ be the set of quantum states generated from a Clifford+$T$ circuit starting from the all-zero state,
i.e. $\mathcal{Q} = \{U_{m-1} \dots U_0\ket{0}^{\otimes n} \mid U_i\in \{H,S,CX,T\}\}$.
We have $\stab \subset \mathcal{Q} \subset \GStab$ and any pure state in $\GStab$ can be approximated by some state in $\mathcal{Q}$ with arbitrary accuracy~\cite{dawson2005solovaykitaev}.
For any $\ket{\varphi}\in \stab$,
we have $\mathcal{S}_{\ket{\varphi}} = \{P \mid P\in \mathrm{PAULI}_n\}$.
For any $\ket \psi \in \GStab$, we have $\mathcal{S}_{\ket{\psi}} = \set{ \sum_i \alpha_i P_i \mid P_i\in \mathrm{PAULI}_n,  \alpha_i \in \mathbb{R}}$.
E.g., when applying Clifford+$T$ gates on $\ket{\varphi}$,
we get generalized stabilizer state~$\ket{\psi}$ with $\mathcal{S}_{\ket{\psi}} = \set{ \sum_i \pm(\frac{1}{\sqrt{2}})^{k_i} P_i\ | \ P_i\in \mathrm{PAULI}_n,  k_i \in \mathbb{N}^+}$ (by updating $\mathcal{S}_{\ket{\varphi}}$ based on \autoref{tab:clifford} and \autoref{tab:cliffordt}).
Combining with \autoref{prop:pauli-basis}, we may flatten summations:
\begin{equation} \label{eq:gendensity}
    \op{\psi}{\psi} = 
     \frac{1}{2^n}\sum_{S\in\mathcal{S}_{\ket{\psi}}} \sum_i \pm(\frac{1}{\sqrt{2}})^{k_i} P_i = \sum \pm(\frac{1}{\sqrt{2}})^{k_i} P_i.
\end{equation}
Hence the density operator is determined by a summation of weighted Pauli strings and there is no need to distinguish which of the $2^n$ generalized stabilizers a Pauli string belongs to. We exploit this in our encoding and \autoref{ex35}.

\begin{example}\label{ex35}
 Reconsider the circuit in \autoref{ex20} and add $T$, $CX$ and $H$ gate:
   \[
    \begin{array}{c}  
      \Qcircuit @C=1em @R=.7em {
        \lstick{\ket{0}} & \qw\ar@{.}[]+<0em,1em>;[d]+<0em,-0.5em> & \gate{H} &\qw\ar@{.}[]+<0em,1em>;[d]+<0em,-0.5em> & \ctrl{1} & \qw\ar@{.}[]+<0em,1em>;[d]+<0em,-0.5em> & \gate{T} &\qw\ar@{.}[]+<0em,1em>;[d]+<0em,-0.5em> & \ctrl{1} & \qw\ar@{.}[]+<0em,1em>;[d]+<0em,-0.5em> &\gate{H} &\qw\ar@{.}[]+<0em,1em>;[d]+<0em,-0.5em> &\qw \\
        \lstick{\ket{0}} & \qw & \qw & \qw & \targ & \qw &\qw&\qw &\targ & \qw &\qw& \qw &\qw\\
        & \ket{\varphi_0} & & \ket{\varphi_1} & & \ket{\varphi_2} & & \ket{\varphi_3} & & \ket{\varphi_4} & & \ket{\varphi_5} &
      }
  \end{array}
  \]
Continuing from \autoref{ex20}, we obtain the following generalized generators.
\[
\begin{aligned}
    \underbrace{
    \gen{\begin{matrix}
    X\otimes X \\
    Z \otimes Z
    \end{matrix}}
    }_{\ket{\phi_2}}
    &\xrightarrow{T_0}
    \underbrace{
    \gen{\begin{matrix}
        X' \otimes X \\
        Z \otimes Z
      \end{matrix}} 
    }_{\ket{\phi_3}}
      \xrightarrow{CX_{0,1}}
    \underbrace{
        \gen{\begin{matrix}
      X' \otimes I \\
      I \otimes Z
      \end{matrix}}
    }_{\ket{\phi_4}}
    \xrightarrow{H_0}
    \underbrace{
        \gen{\begin{matrix}
      Z'\otimes I \\
      I \otimes Z
      \end{matrix}}
    }_{\ket{\phi_5}},
  \end{aligned}
\]
where $X' = \frac{1}{\sqrt{2}}(X+Y)$ and $Z' = \frac{1}{\sqrt{2}}(Z-Y)$.

In our encoding, as in the above definition of generalized stabilizer states, we let satisfying assignments represent not just the generator set, but the entire stabilizer groups.
These groups are:
  $ \mathcal{S}_{\ket{\phi_2}} = \{X \otimes X,Z \otimes Z,-Y \otimes Y,I \otimes I\}$,
    $\mathcal{S}_{\ket{\phi_3}} = \{X' \otimes X, Z \otimes Z, -Y' \otimes Y,I \otimes I\}$,
  $\mathcal{S}_{\ket{\phi_4}} = \{X'\otimes I,I \otimes Z, X'\otimes Z,I \otimes I\}$
  and $\mathcal{S}_{\ket{\phi_5}} = \{Z' \otimes I,I \otimes Z, Z'\otimes Z,I \otimes I\}$,
  where $Y' = \frac{1}{\sqrt{2}}(Y-X)$.

Finally, according to \autoref{eq:gendensity}, we may equally expand e.g.  $\mathcal{S}_{\ket{\phi_5}}$ to:\\
$\mathcal{S}_{\ket{\phi_5}}' ~~=~~ \{ \frac{1}{\sqrt{2}}  Z \otimes I,~~ -\frac{1}{\sqrt{2}}  Y \otimes I,~~I \otimes Z,~~ \frac{1}{\sqrt{2}}  Z \otimes Z,~~ -\frac{1}{\sqrt{2}}  Y \otimes Z,~~ I \otimes I\}$.
  \qed
\end{example}

\subsection{Encoding Clifford+T circuits}
Since generalized stabilizer states can be determined by a sum of weighted Pauli strings as shown in \autoref{eq:gendensity},
we encode a state by Boolean constraints whose satisfying assignments represent those weighted Pauli strings.
The idea is to encode the sign, the Pauli string, and the weights separately.
We will start with encoding the Pauli string and the sign.

A Pauli string $\mathbf{P}\in\mathrm{PAULI}_n$ can be encoded by $2n$ Boolean variables as
$2$ bits are needed for each of the $n$ Pauli matrices 
as $\sigma[x_{0},z_{0}] \otimes \ldots \otimes \sigma[x_{n-1},z_{n-1}]$,
where 
$j$-th Pauli matrix is indicated by variables $x_j$ and $z_j$.
To encode the sign,
only one Boolean variable $r$ is needed.
We introduce weighted model counting to interpret the sign by defining $W(r) = -1$ and $W(\no r) = 1$.
Thus $P\in \pm\mathrm{PAULI}_n$ can be interpreted as $(-1)^r\otimes_{i\in[n]}\sigma[x_i,z_i]$.
For example, consider Boolean formula $r\no x_1 z_1x_2z_2$. Its one satisfying assignment is $\{r\rightarrow 1, x_1 \rightarrow 0, z_1\rightarrow 1, x_2 \rightarrow 1, z_2\rightarrow 1\} \equiv -Z\otimes Y$.
Without loss of generality,
we set the initial state to be all zero state $\ket{0\dots 0}$, whose stabilizer group is $\mathcal{S}_{\ket{0\dots 0}} = \{\otimes_{i\in[n]} \Pbf_i \mid \Pbf_i\in\{Z,I\}\} \equiv \{(-1)^r\otimes_{i\in[n]}\sigma[x_i,z_i] \mid x_i = 0, z_i\in\{0,1\}\textup{ and } r = 0\}$.
Hence the Boolean formula for the initial state is defined as $F_{\init}(\bm{x}^0,\bm{z}^0,r) \defn \neg r \land \bigwedge_{j\in[n]}\neg x^0_{j}$,
where we use superscripts, e.g., $r^t, x_1^t, z_1^t$, to denote variables at time step $t$ (after $t$ gates from the circuit have been applied).
Note that we assign a weight to $r$ only on the final time step, 
as explained later.
\begin{example}
  Consider the initial state $\ket{00}$ in \autoref{ex35}.
  The corresponding constraint at time step $0$ is $\neg r^0\neg x^0_{0}\neg x^0_{1}$,
  which has satisfying assignments:
  \[
  \left\{
    \begin{aligned} 
    &\{r^0 \rightarrow 0, x_0^0 \rightarrow 0, x_1^0\rightarrow 0, z_0^0\rightarrow 1, z_1^0\rightarrow 1\} \\
    &\{r^0 \rightarrow 0, x_0^0 \rightarrow 0, x_1^0\rightarrow 0, z_0^0\rightarrow 1, z_1^0\rightarrow 0\} \\
    &\{r^0 \rightarrow 0, x_0^0 \rightarrow 0, x_1^0\rightarrow 0, z_0^0\rightarrow 0, z_1^0\rightarrow 1\} \\
    &\{r^0 \rightarrow 0, x_0^0 \rightarrow 0, x_1^0\rightarrow 0, z_0^0\rightarrow 0, z_1^0\rightarrow 0\}
    \end{aligned}
     \right\}  \equiv 
  \left\{
    \begin{aligned} 
    Z \otimes Z\\
    Z \otimes I\\
    I \otimes Z\\
    I \otimes I
    \end{aligned}
     \right\} 
  \]    
  \vspace{-4.5em}
  
  \hfill\qed
  \vspace{-.5em}
\end{example}
While we encode those signed Pauli strings using variables from $\set{x_{j}, z_{j}, r \mid j\in[n]}$,
to encode the weights, 
we introduce new variables $u$.
When a $T_j$ is performed and $x_{j}=1$, which means executing a $T$ gate on $j$-th qubit with certain stabilizer being either $\pm X$ or $\pm Y$,
we set $u = 1$ to indicate a branch of the operator, i.e. $TXT^\dag = X' = \frac{1}{\sqrt{2}}(X+Y)$ and $TYT^\dag= Y' =\frac{1}{\sqrt{2}}(Y-X)$.
Therefore, for each generalized stabilizer state in a circuit with $m$ gates and $n$ qubits, 
the encoding uses the set of variables
$V^t = \{x^t_{j}, z^t_{j}, r^t, u_{t}\mid \ j\in[n]\}$,
where $t\in\{0,\dots,m\}$ denotes a time step. 
\autoref{tab:booltgate} illustrates the details of how the Boolean variables in $V^t$ change over a $T$ gate. 
Here each satisfying assignment indicates a weighted Pauli string,
for example, there are two assignments for a stabilizer $\frac{1}{\sqrt{2}}(X+Y)$.

\begin{table}[t!]
\caption{Boolean variables under the action of conjugating one T gate. Here we omit the sign $(-1)^{r^{t}}$ for all $\gen{G}$ and sign $(-1)^{r^{t+1}}$ for all $\gen{TGT^\dag}$.}
\label{tab:booltgate}
\setlength{\tabcolsep}{6pt} %
\renewcommand{\arraystretch}{1} %
  \centering
  \begin{tabular}{rc|c|c|c|c|c}
    \hline
     $\gen G$  & $x^t z^t r^t$ & $\gen{TGT^\dag}$  & $x^{t+1}$ & $z^{t+1}$ & $r^{t+1}$ & $u$  \\
     \hline
     $I$    & 00 $r^t$ & $I$ & \multirow[]{2}{*}{ 0} & \multirow[]{2}{*}{$z^t$} & \multirow[]{2}{*}{$r^t$} &\multirow[]{2}{*}{0} \\
\cline{1-3}
    $Z$     & 01 $r^t$ &$Z$  &  & &  \\
    \hline
    $X$    & 10 $r^t$ &$\frac{1}{\sqrt{2}}(X+Y)$ & \multirow[]{2}{*}{1} & \multirow[]{2}{*}{\{0,1\}} & $r^t$ & \multirow[]{2}{*}{$1$} \\
    \cline{1-3}
    $Y$     & 11 $r^t$ &$\frac{1}{\sqrt{2}}(Y-X)$ & & & $r^t\oplus \neg z^{t+1}$ & \\
    \hline
  \end{tabular}
\end{table}

Using \autoref{tab:clifford} and \autoref{tab:booltgate},
given a single-qubit Clifford+T gate $\qg{j}$ on qubit~$j$ at time step $t$ (or $CX_{j,k}$ on qubits $j,k$),
we can derive a Boolean formula $F_{\qg_j}(V^t,V^{t+1})$,
abbreviated as $\qg_j^t$, as in the following.
\begin{equation}\label{cons:clifford}
  \begin{aligned}
    H^t_j \defn ~&r^{t+1} \Longleftrightarrow r^{t} \oplus x^t_{j}z^t_{j}
    ~~\land~~  z^{t+1}_{j} \Longleftrightarrow  x^t_{j}
    ~~\land~~  x^{t+1}_{j} \Longleftrightarrow  z^t_{j}\\
    S^t_j \defn  ~&r^{t+1} \Longleftrightarrow r^{t} \oplus x^t_{j}z^t_{j}
    ~~\land~~  z^{t+1}_{j} \Longleftrightarrow  x^t_{j} \oplus  z^t_{j}  \\
    CX^t_{j,k} \defn  ~& r^{t+1} \Longleftrightarrow r^{t} \oplus x^t_{j}z^t_{k}
      (x^t_{k} \oplus \neg z^t_{j})
    \
    ~~\land~~  x^{t+1}_{k} \Longleftrightarrow  x^t_{k} \oplus  x^t_{j} ~~\land~~ \\
    &  z^{t+1}_{j} \Longleftrightarrow  z^t_{j} \oplus  z^t_{k} \\
    T^t_j \defn  ~&x^{t+1}_{j} \Longleftrightarrow x^t_{j} \quad \wedge \quad 
  x^{t}_{j} \lor (z^{t+1}_{j} \Leftrightarrow z^t_{j}) \quad \wedge\\
  & r_i^{t+1} \Longleftrightarrow r^t \oplus  x^t_{j}  z^t_{j}  \neg z^{t+1}_{j}  \quad \wedge \quad u_{t} \Longleftrightarrow x^t_{j}.
  \end{aligned}
\end{equation}

The above omits additional constraints $v^{t+1} \Leftrightarrow v^t$
for all unconstrained time-step-$t+1$ variables, i.e., for all $v^t \in V_l^t$ with $l \neq j, k$.
In fact, it is not necessary to allocate new variables for those unconstrained time-step-$t+1$ variables.
The constraint $v^{t+1} \Leftrightarrow v^t$ can be effectively implemented by reusing the Boolean variables $v^t$ for $v^{t+1}$.
Thus for each time step, only a constant number of new variables need to be allocated.
For instance,
when applying $H_j$ gate,
we only need one new variable $r_i^{t+1}$,
since we can reuse the variable of $x_j^t$ for $z_j^{t+1}$ and  $z_j^t$ for $x_j^{t+1}$.
And when applying $CX_{ij}$ gate,
we need three new variables for $r^{t+1}$, $x_j^{t+1}$ and $z_j^{t+1}$.
Additionally, since variables for all $x_j^0$ and $z_j^0$ with $j\in[n]$ are allocated initially,
and as shown in \autoref{sec34},
performing a measurement introduces no new variable,
the size of our encoding is $O(n+m)$.

To this end, given a Clifford+$T$ circuit $C=\qg^0\cdots\qg^{m-1}$ without measurements,
we can build the following Boolean constraint.
\begin{equation}
\label{cons:clifft}
F_C(V^0,\dots,V^m)\defn F_\init(V^0) \wedge \bigwedge_{t\in[m]}F_{\qg^{t}}(V^t,V^{t+1}).
\end{equation}

The satisfying assignments of our encoding will represent weighted Pauli strings,
so that we can get the density operator at time $m$ by ranging over satisfying assignments $\alpha\in SAT(F_C)$:
  \begin{equation}
    \rho^m = \sum_{\alpha\in SAT(F_C)} F_C(\alpha) \cdot W(\alpha) \cdot\bigotimes_{j\in[n]}\sigma[\alpha(x^{m}_{j}),\alpha(z^{m}_{j})],
  \end{equation}
where $W(r^m) = -1$, $W(\no{r^m}) = 1$, $W(u_{t}) = \frac{1}{\sqrt{2}}$, $W(\no u_{t}) = 1$ for all $t\in\{0,\dots,m\}$
(and all other variables are unbiased).
So we will get the weight as $W(\alpha) = W(\alpha(r^m))\prod_{t\in[m]}W(\alpha(u_{t}))$. 
As mentioned before, we only assign weights to $r^m$ where $m$ is the final time step.
We allocate a new $r^{t+1}$ for each time step $t$ as we always get $r^{t+1}$ from a constraint related to $r^t$,
but the sign of the final state is given by $r^m$.
So we leave $r^t$ unbiased except when~$t$ is the final time step.
Additionally, there is no complex number assigned to any weights enabling the application of a classical weighted model counter that allows negative weights.
It is worth noting that instead of using satisfying assignments to represent stabilizers in Clifford circuits,
we represent weighted Pauli strings as satisfying assignments, 
which may be one summand of a stabilizer.
For Clifford circuits,
there are $2^n$ satisfying assignments for a $n$-qubit circuit at each time step.
While for Clifford+$T$ circuits,  
the satisfying assignments are more than $2^n$.

\begin{example}\label{ex36}
  Reconsider \autoref{ex35}, after solving the constraint $F_{\init} (V^0)\wedge H_0^0\wedge CX_{0,1}^1 \wedge T_0^2 \wedge CX_{0,1}^3 \wedge H_0^4$,
  the satisfying assignments encoding $\ket{\phi_5}$ will be 
  \[
    \left\{
        \begin{aligned}
            & \{r^5 \rightarrow 0, x^5_0 \rightarrow 0, x^5_1\rightarrow 0, z^5_0\rightarrow 1, z^5_1\rightarrow 0, u_2\rightarrow 1\}, \\
            & \{r^5\rightarrow 1, x^5_0 \rightarrow 1, x^5_1\rightarrow 0, z^5_0\rightarrow 1, z^5_1\rightarrow 0, u_2\rightarrow 1\}, \\
            & \{r^5 \rightarrow 0, x^5_0 \rightarrow 0, x^5_1\rightarrow 0, z^5_0\rightarrow 0, z^5_1\rightarrow 1, u_2\rightarrow 0\}, \\
            & \{r^5 \rightarrow 0, x^5_0 \rightarrow 0, x^5_1\rightarrow 0, z^5_0\rightarrow 1, z^5_1\rightarrow 1, u_2\rightarrow 1\}, \\
            & \{r^5 \rightarrow 1, x^5_0 \rightarrow 1, x^5_1\rightarrow 0, z^5_0\rightarrow 1, z^5_1\rightarrow 1, u_2\rightarrow 1\}, \\
            & \{r^5 \rightarrow 0, x^5_0\rightarrow 0, x^5_1\rightarrow 0, z^5_0\rightarrow 0, z^5_1\rightarrow 0, u_2\rightarrow 0\}
        \end{aligned}
      \right\}
      \equiv 
  \left\{
    \begin{aligned} 
    \tfrac{1}{\sqrt{2}} Z \otimes I\\
   -\tfrac{1}{\sqrt{2}} Y \otimes I\\
    I \otimes Z\\
    \tfrac{1}{\sqrt{2}} Z \otimes Z\\
   -\tfrac{1}{\sqrt{2}} Y \otimes Z\\
    I \otimes I
    \end{aligned}
     \right\} 
  \]
  where $w(\no r^5) = 1$, $w(\no r^5) = 1$, $w(u_2) = \frac{\sqrt{2}}{2}$ and $w(\no u_2) = 0$.
  Here we omit the satisfying assignments for $\{r^t,x^t_0,x^t_1,z^t_0,z^t_1 \mid 0\leq t\leq 4\}$. \qed
\end{example}

\subsection{Encoding random rotation gates}

Our encoding can be extended to other non-Clifford gates, which we demonstrate by adding rotation gates $RX(\theta)$, $RY(\theta)$, and $RZ(\theta)$,
where $\theta$ is an angle.

\def\horizontaldistance{\kern2pt}
\def\verticaldistance{10pt}

\begin{table}[t!]
    \caption{Lookup table for the action of conjugating Pauli gates by rotation gates. }
  \label{tab:rotation}
  \setlength{\tabcolsep}{15pt} 
  \centering
      \begin{tabularx}{\textwidth}{c | c | c c}
    \toprule
    \textbf{Gate} & \textbf{Matrix Form} & \textbf{In} & \textbf{Out} \\
    \midrule
      & \multirow{3}{*}{$\begin{bmatrix*}[r]
      \cos(\frac{\theta}{2}) & -i\sin(\frac{\theta}{2}) \\[\verticaldistance]
      -i\sin(\frac{\theta}{2}) & \cos(\frac{\theta}{2})
    \end{bmatrix*}$} & $X$ & $X$ \\
    $RX(\theta)$ & & $Y$ & $\cos(\theta)Y + \sin(\theta)Z$ \\
    &   & $Z$ & $\cos(\theta)Z - \sin(\theta)Y$ \\
    \midrule
    & \multirow{3}{*}{$\begin{bmatrix*}[r]
      \cos(\frac{\theta}{2}) & -\sin(\frac{\theta}{2}) \\[\verticaldistance]
      \sin(\frac{\theta}{2}) & \cos(\frac{\theta}{2})
    \end{bmatrix*}$} & $X$ & $\cos(\theta)Z + \sin(\theta)X$ \\
    $RY(\theta)$ & & $Y$ & $Y$ \\
    & & $Z$ & $\cos(\theta)X - \sin(\theta)Z$ \\
    \midrule
     & \multirow{3}{*}{    $\begin{bmatrix}
      \exp(-i\frac{\theta}{2}) & 0 \\[\verticaldistance]
      0 & \exp(i\frac{\theta}{2})
    \end{bmatrix}$} & $X$ & $\cos(\theta)X+\sin(\theta)Y$ \\
    $RZ(\theta)$ & & $Y$ & $\cos(\theta)Y - \sin(\theta)X$ \\
    & & $Z$ & $Z$ \\
    \bottomrule
  \end{tabularx}
\end{table}

In particular,
we have $T = \exp(-i\frac{\pi}{8})RZ(\frac{\pi}{4})$, $S = \exp(-i\frac{\pi}{4})RZ(\frac{\pi}{2})$ and $X = -iRX(\pi)$, $Y = -iRY(\pi)$, $Z = -iRZ(\pi)$.
Note however that the stabilizer formalism discards the global phase of a state as it updates stabilizers by conjugation, e.g., 
$TPT^\dag = \left(\exp(-i\frac{\pi}{8})RZ(\frac{\pi}{4})\right) P \left(\exp(-i\frac{\pi}{8})RZ(\frac{\pi}{4})\right)^\dag = RZ(\frac{\pi}{4}) P RZ(\frac{\pi}{4})^\dag $.
Based on \autoref{tab:rotation},
the constraints for rotation gates are as below, 
where we keep the coefficients $\cos(\theta)$ and $\sin(\theta)$ by defining the weights of the new variables 
as $w(u_{1t}) = \cos(\theta)$, $w(\neg u_1) = w(\neg u_2) = 1$ and $w(u_2) = \sin(\theta)$.
\begin{align*}
 & RX^t_j \defn 
 &&z^{t+1}_{j} \Longleftrightarrow z^t_{j} ~~\land~~
  z^{t}_{j} \vee (x^{t+1}_{j} \Leftrightarrow x^t_{j})~~\land~~
 r^{t+1} \Longleftrightarrow r^t \oplus  z^t_{j} \neg x^t_{j} x^{t+1}_{j} ~~\land~~ \\
 & &&u_{1t} \Longleftrightarrow z^t_{j}(x^t_{j}x^{t+1}_{j}\vee \neg x^t_{j} \neg x^{t+1}_{j}) ~~\land~~ 
 u_{2t} \Longleftrightarrow z^t_{j}(\neg x^t_{j}x^{t+1}_{j}\vee x^t_{j} \neg x^{t+1}_{j}). \\
 & RY^t_j \defn 
 && (x^{t}_{j} \oplus z^{t}_{j}) \Longleftrightarrow (x^{t+1}_{j} \oplus z^{t+1}_{j}) ~~\land~~ 
 (x^{t}_{j} \oplus \neg z^{t}_{j}) \Longrightarrow (x^{t}_{j} z^{t}_{j} \Leftrightarrow x^{t+1}_{j} \Leftrightarrow z^{t+1}_{j})
 \land~~ \\
 & && r^{t+1} \Longleftrightarrow r^t \oplus  z^t_{j} z^{t+1}_{j} ~~\land~~ u_{1t} \Longleftrightarrow ( x^{t+1}_{j}z^{t}_{j}\oplus x^{t}_{j}z^{t+1}_{j}) ~~\land~~ \\
 & &&u_{2t} \Longleftrightarrow (x^{t+1}_{j}x^{t}_{j}\oplus z^{t+1}_{j}z^{t}_{j}). \\
 & RZ^t_j \defn 
 &&x^{t+1}_{j} \Longleftrightarrow x^t_{j} ~~\land~~
 x^{t}_{j} \lor (z^{t+1}_{j} \Leftrightarrow z^t_{j})~~\land~~
 r^{t+1} \Longleftrightarrow r^t \oplus  x^t_{j}  z^t_{j}  \neg z^{t+1}_{j} ~~\land~~ \\
 & &&u_{1t} \Longleftrightarrow x^t_{j}(z^t_{j}z^{t+1}_{j}\vee \neg z^t_{j} \neg z^{t+1}_{j}) ~~\land~~ 
 u_{2t} \Longleftrightarrow x^t_{j}(\neg z^t_{j}z^{t+1}_{j}\vee z^t_{j} \neg z^{t+1}_{j}).
\end{align*}

\subsection{Measurement}\label{sec34}

In this section,
we consider projective measurement both on a single qubit and on multiple qubits of a quantum system.
Single-qubit measurement~\cite{classical2021kocia} can be used for extracting only one bit of information from a $n$-qubits quantum state to effectively protect quantum information~\cite{optimizing2023polla}.
It is also used in random quantum circuits, which contributes to the study of quantum many-body physics~\cite{random2023fisher}.
Measurement on multiple qubits is generally used in quantum algorithms, such as in Grover and  Shor algorithms, to get the final result.
We implement both measurements using \emph{Pauli measurement},
where projectors are Pauli strings.

\vspace{-1em}
\subsubsection{Single-qubit Pauli Measurement.}
Let $\mathbb{P}_{k,0} = I \otimes \cdots \otimes \op{0}{0}_k \otimes \cdots \otimes I= \frac{1}{2}(Z_k + I^{\otimes n})$ and 
$\mathbb{P}_{k,1} = I \otimes \cdots \otimes \op{1}{1}_k \otimes \cdots \otimes I
    = \frac{1}{2}(- Z_k + I^{\otimes n})$ for $k\in[n]$.
When measuring $k$-th qubit of a $n$-qubit state $\ket{\psi}$ using projectors $\{\mathbb{P}_{k,0}, \mathbb{P}_{k,1}\}$,
we get two possible outcomes: $0$ with probability $p_{k,0}$  and $1$ with probability $p_{k,1}$.
It follows that $p_{k,0} = \Tr(\mathbb{P}_{k,0}\op{\psi}{\psi})$, where $\tr$ is the trace mapping~\cite{nielsen2000quantum}.
As shown in \autoref{eq:gendensity},
the density operator $\op{\psi}{\psi}$ can be written using generalized stabilizers i.e., $\op{\psi}{\psi} = 
\frac1{2^n} \sum_{P\in\mathcal{S}_{\ket{\psi}}} P$ for 
 $P = \lambda_{P} \mathbf{P}$ where $\mathbf{P}\in\mathrm{PAULI}_n$ and $\lambda_{P}\in\mathbb{R}$. 
The probability $p_{k,0}$ can be obtained as follows.
\begin{equation}\label{eq:prob}
  \begin{aligned}
    p_{k,0} & = \tr(\mathbb{P}_{k,0}\op{\psi}{\psi}) = \tr(\frac{1}{2}( I^{\otimes n} + Z_k)\op{\psi}{\psi})   \\ 
    & = \frac{1}{2}(\tr(\op{\psi}{\psi}) + \tr(Z_k \op{\psi}{\psi})) = \frac{1}{2} + \frac{1}{2^{n+1}}\sum_{\lambda_{P}\mathbf{P}\in\mathcal{S}_{\ket{\psi}}}\lambda_{P}\tr(Z_k \mathbf{P}) 
  \end{aligned}
\end{equation}
For any $\mathbf{P}\in\mathrm{PAULI}_n$,
the trace $\tr(\mathbf{P})$ is non-zero if and only if $\mathbf{P} = I^{\otimes n}$.
Given the fact that $\tr(A\otimes B) = \tr(A)\tr(B)$,
when considering a $n$-Pauli string $\Pbf= \Pbf_0\otimes\dots\otimes \Pbf_{n-1}$,
we can express $\tr(\Pbf)$ as a product of its constituent Pauli matrices $\tr(\Pbf_0)\cdots\tr(\Pbf_{n-1})$.
Since it is easy to see that $\tr(X) = \tr(Y) = \tr(Z) = 0$,
if there exists a $\Pbf_i\in\{X,Y,Z\}$, then $\tr(\Pbf) = 0$.
Furthermore, $\Pbf = Z_k$ if and only if
$Z_k\Pbf = I^{\otimes n}$.
Thus the result of \autoref{eq:prob} can be simplified as
$p_k = \frac{1}{2} + \frac{1}{2}\sum_{P = \lambda_P Z_k} \lambda_P$.
In other words, to get the probability of the outcome being 0 on measuring $k$-th qubit,
we need to sum up the weights of all elements $Z_k$ in the flattened Pauli group of \autoref{eq:gendensity}. The latter can be encoded as follows, where $m$ is the final time step.
\begin{equation}\label{cons:measure}
  F_{M_k}(V^m) \defn \bigwedge_{j\in[n]} \neg x^m_j \wedge \bigwedge_{j\in[n], j\neq k} \neg z^m_j \wedge z_k^m
\end{equation}
A Clifford+$T$ circuit with a single-qubit Pauli measurement at the end can be encoded by conjoining the constraint of initial state and gates in \autoref{cons:clifft}, and the one for the measurement at the end.
\begin{example}
Consider the circuit in \autoref{ex36} and assume we want to perform single-qubit Pauli measurement on the first qubit using $\{\frac{I\otimes I+Z_0}{2}, \frac{I\otimes I-Z_0}{2}\}$.
\[
  \begin{array}{c}  
    \Qcircuit @C=1em @R=.7em {
      \lstick{\ket{0}} & \gate{H} & \ctrl{1} & \gate{T} & \ctrl{1} & \gate{H} & \meter \\ 
      \lstick{\ket{0}} & \qw & \targ & \qw & \targ & \qw & \qw
    }
\end{array}
\]
By adding the measurement constraint $  M_0 \defn \no x^5_0 \wedge \no x^5_1 \wedge z^5_0 \wedge \no z^5_1$ to the circuits constraints in \autoref{ex36},
we get $F_{\init}(V^0)\wedge H_0^0\wedge CX^1_{0,1} \wedge T_0^2 \wedge CX_{0,1}^3\wedge H_0^4 \wedge F_{M_0}(V^5)$.
The satisfying assignments will be the subset of the solutions in \autoref{ex36}:
$\left\{\sigma = \{r^5 \rightarrow 0, x^5_0 \rightarrow 0, x^5_1\rightarrow 0, z^5_0\rightarrow 1, z^5_1\rightarrow 0, u_2\rightarrow 1\}\right\}$,
where we only show the variables in $V^5$ representing the final state.
The resulting probability is
$W(\sigma(r^5))W(\sigma(u_2)) = \frac{1}{\sqrt{2}}$.
\qed
\end{example}

\vspace{-1em}
\subsubsection{Multi-qubit Pauli Measurement.}

Similar to the single-qubit Pauli measurement,
we can resolve the constraint of multi-qubit Pauli measurement based on the measurement projector $\mathbb{P}$.
Let $Q$ be the set of all qubits and $\Bar{Q}\subseteq Q$ be the set of qubits being measured.
The projector measuring qubits in $\Bar{Q}$ is defined as $\mathbb{P}_{\Bar{Q}} = \bigotimes_{q\in Q}\mathbb{P}_q$ where $\mathbb{P}_q=(I+Z)/2$ for $q\in\Bar{Q}$ and $\mathbb{P}_q=I$ for $q\in Q\backslash\Bar{Q}$.
We can derive the constraint $F_{\mathbb{P}_q}(x_q^m,z_q^m) = \no x_q^m$ for $q\in\Bar{Q}$ and no constraint for $q\in Q\backslash\Bar{Q}$.
Thus $F_{M_{\Bar{Q}}} = \bigwedge_{q\in\Bar{Q}}\no x_q^m$.
In this way, 
it is possible to construct the constraint for any computational basis measurement. 

We conclude \autoref{sec:encoding} with \autoref{prop:simu}, which also shows that our encoding implements a strong simulation of a universal quantum circuit.
\begin{proposition}\label{prop:simu}
Given an $n$-qubit quantum circuit $C$, its encoding 
$F(V^0,\dots,V^m) = F_C(V^0,\dots,V^m)\wedge F_{M_{\Bar{Q}}}(V^m)$ with according weight function $W$ and a computational basis projector $\mathbb{P}$ on $q \leq n$ qubits ($q = \sizeof{\bar Q}$), WMC on $F$ computes the probability of measuring the outcome corresponding to $\mathbb{P}$ on circuit $C$, i.e., we have 
$\frac{1}{2^{q}}MC_W(F) = \bra{0}^{\otimes n}C^\dag\mathbb{P} C\ket{0}^{\otimes n}$.
\end{proposition}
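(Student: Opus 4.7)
The plan is to prove the proposition by induction on the circuit length $m$, establishing an invariant that relates satisfying assignments of the partial encoding up to time step $t$ to the weighted Pauli expansion of the intermediate density operator $\rho^t$ of the state $U^{t-1}\cdots U^0\ket{0}^{\otimes n}$; at $t=m$ the invariant is then combined with a trace identity for the projector $\mathbb{P}$. Concretely, the invariant to be maintained inductively is
\[
\rho^t \;=\; \frac{1}{2^n}\sum_{\alpha}\, W_t(\alpha)\,(-1)^{\alpha(r^t)}\bigotimes_{j\in[n]}\sigma[\alpha(x_j^t),\alpha(z_j^t)],
\]
where $\alpha$ ranges over satisfying assignments of $F_\init\wedge\bigwedge_{s<t}F_{\qg^s}$ restricted to the variables up to time $t$, and $W_t(\alpha) = \prod_{s<t} W(\alpha(u_s))$ collects the branching weights introduced so far. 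The base case at $t=0$ is immediate: $F_\init$ fixes $r^0=0$ and $x_j^0=0$, so its $2^n$ models enumerate exactly the Pauli strings in $\{I,Z\}^{\otimes n}$ each with weight $1$, and the stabilizer expansion $\ket{0}^{\otimes n}\bra{0}^{\otimes n}=\prod_j(I+Z_j)/2$ gives $\sum_{S\subseteq[n]} Z_S = 2^n\ket{0}^{\otimes n}\bra{0}^{\otimes n}$, matching the invariant.

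The inductive step is a case analysis over gate type. For each Clifford gate ($H,S,CX$), one checks row by row against \autoref{tab:clifford} that the Boolean constraint in \autoref{cons:clifford} induces a weight-preserving bijection between models at time $t$ and models at time $t+1$ that implements $P\mapsto UPU^\dagger$ on every Pauli summand, including the sign update on $r$. For $T_j$, the constraint is engineered so that a model with $\alpha(x_j^t)=0$ (whose $j$-th entry is $I$ or $Z$) maps bijectively to a single model at time $t+1$ with $\alpha(u_t)=0$, while a model with $\alpha(x_j^t)=1$ (entry $X$ or $Y$) maps to the two models differing in $z_j^{t+1}$ with $\alpha(u_t)=1$, whose combined weighted sum reproduces $TXT^\dagger=\tfrac{1}{\sqrt 2}(X+Y)$ and $TYT^\dagger=\tfrac{1}{\sqrt 2}(Y-X)$; in particular, the sign update $r^{t+1}\Leftrightarrow r^t\oplus x_j^tz_j^t\neg z_j^{t+1}$ delivers the minus sign in the $Y\mapsto Y-X$ branch precisely when $\alpha(z_j^t)=1$ and $\alpha(z_j^{t+1})=0$. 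Rotation gates are handled analogously via \autoref{tab:rotation}, with the two fresh variables $u_{1t},u_{2t}$ of weights $\cos\theta,\sin\theta$ capturing the two summands of the transformed Pauli.

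At $t=m$ the invariant yields $\rho^m$ in the required form. Expanding $\mathbb{P}=\tfrac{1}{2^q}\sum_{S\subseteq\bar Q}Z_S$ and using linearity of trace together with the identity $\mathrm{Tr}(Z_S\cdot Q)\neq 0$ iff $Q=\pm Z_S$ shows that $\mathrm{Tr}(\mathbb{P}\rho^m)$ receives contributions only from Pauli summands whose assignment $\alpha$ satisfies $\alpha(x_j^m)=0$ for all $j$ and $\alpha(z_j^m)=0$ for $j\notin\bar Q$---these are precisely the assignments selected by $F_{M_{\bar Q}}$. On each such assignment the trace contributes $2^n(-1)^{\alpha(r^m)}$, which, combined with the $\tfrac{1}{2^n}$ prefactor in the invariant and the final-step weight $W(r^m)=-1,W(\no{r^m})=1$, yields $\mathrm{Tr}(\mathbb{P}\rho^m)=\tfrac{1}{2^q}MC_W(F)$.

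The main obstacle is the inductive step for the non-Clifford gates: one must verify simultaneously that the Boolean formula preserves the Pauli bookkeeping under variable reuse, that it doubles the model count exactly where $T$ or a rotation acts nontrivially with the correct amplitude encoded by a fresh $u$ variable, and that the sign variable $r$ is propagated correctly even though $r^t$ is unbiased for every $t<m$---so every intermediate sign flip must be absorbed into the Boolean biconditionals of \autoref{cons:clifford} rather than into the weight function. Once this invariant is established, the measurement step reduces to a clean trace calculation.
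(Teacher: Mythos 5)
Your proof is correct and follows essentially the same route as the paper, which states \autoref{prop:simu} without a separate proof and instead relies on the cumulative development of \autoref{sec:encoding} (the gate lookup tables, the representation of $\rho^m$ as a weighted sum of Pauli strings over satisfying assignments, and the trace computation in \autoref{sec34}); your induction on the time step simply makes that argument explicit. One detail worth flagging: your trace analysis correctly requires the measurement constraint to force $x_j^m=0$ for \emph{all} $j$ and $z_j^m=0$ for $j\notin\bar Q$, which is stronger than the literal $F_{M_{\bar Q}}=\bigwedge_{q\in\bar Q}\neg x_q^m$ displayed in the paper (the two coincide only when every qubit is measured), so the constraint you use is the one under which the stated identity actually holds.
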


\section{Experiments}
\label{S5}
\label{sec:experiments}

To show the effectiveness of our approach, we implemented a WMC-based simulator in a tool called QCMC.
It reads quantum circuits in QASM format~\cite{cross2022openqasm}, encodes them
to Boolean formulas in conjunctive normal form (CNF) as explained in \autoref{sec3},
and then uses the weighted model counter GPMC~\cite{hashimoto2020gpmc} to solve these constraints.
We choose GPMC as it is the best solver supporting negative weights in model counting competition 2023~\cite{modelcounting2023}.
The resulting implementation and evaluation are publicly available at 
\href{https://github.com/System-Verification-Lab/Quokka-Sharp}{https://github.com/System-Verification-Lab/Quokka-Sharp}.

We performed a classical simulation of a quantum circuit 
comparing our method  against two state-of-the-art tools: QuiZX~\cite{kissinger2022simulating} based on ZX-calculus~\cite{coecke2011interactingZXAlgebra} and Quasimodo~\cite{quasimodo} based on CFLOBDD~\cite{sistla2023weighted}.
In particular, this empirical analysis is performed on two families of circuits:
(i) random Clifford+$T$ circuits, which mimic hard problems arising in quantum chemistry~\cite{wright2022chemistry} and quantum many-body physics~\cite{random2023fisher};
(ii) random circuits mimicking oracle implementations; %
(iii) all benchmarks from the public benchmark suite MQT Bench~\cite{mqt2023quetschlich},
which includes many important quantum algorithms like QAOA, VQE, QNN, Grover, etc.
All experiments have been conducted on a 3.5 GHz M2 Machine with MacOS 13 and 16 GB RAM. 
We set the time limit to be 5 minutes (300 seconds) and include the time to read a QASM file, construct the weighted CNF, and perform the model counting in all reported runtimes.

\vspace{-1em}
\paragraph{Results.}

First, we show the limit of three methods using the set of benchmarks generated by \cite{kissinger2022simulating}.
They construct random circuits with a given number of $T$ gates by exponentiating Pauli unitaries in the form of $\exp(-i(2k+1)\frac{\pi}{4}P)$ where $P$ is a Pauli string and $k\in\{1,2\}$.
We reuse their experimental settings, which gradually increase the $T$ count (through Pauli exponentiation) for $n =50$ qubits, and we add an experiment with $n = 100$ qubits.
Accordingly, we generate 50 circuits with different random seeds for each $n = 50$ and  $T \in [0\text{--}100]$ and each $n = 100$ and  $T \in [0\text{--}180]$. We then perform a single-qubit Pauli measurement on the first qubit.
We plot the minimal time needed and the rate of successfully getting the answer in 5 minutes among all 50 simulation runs
in \autoref{fig:gadget}.

\begin{figure}[h!]
   \vspace*{-5mm} 
  \centering
  \includegraphics[height=3in]{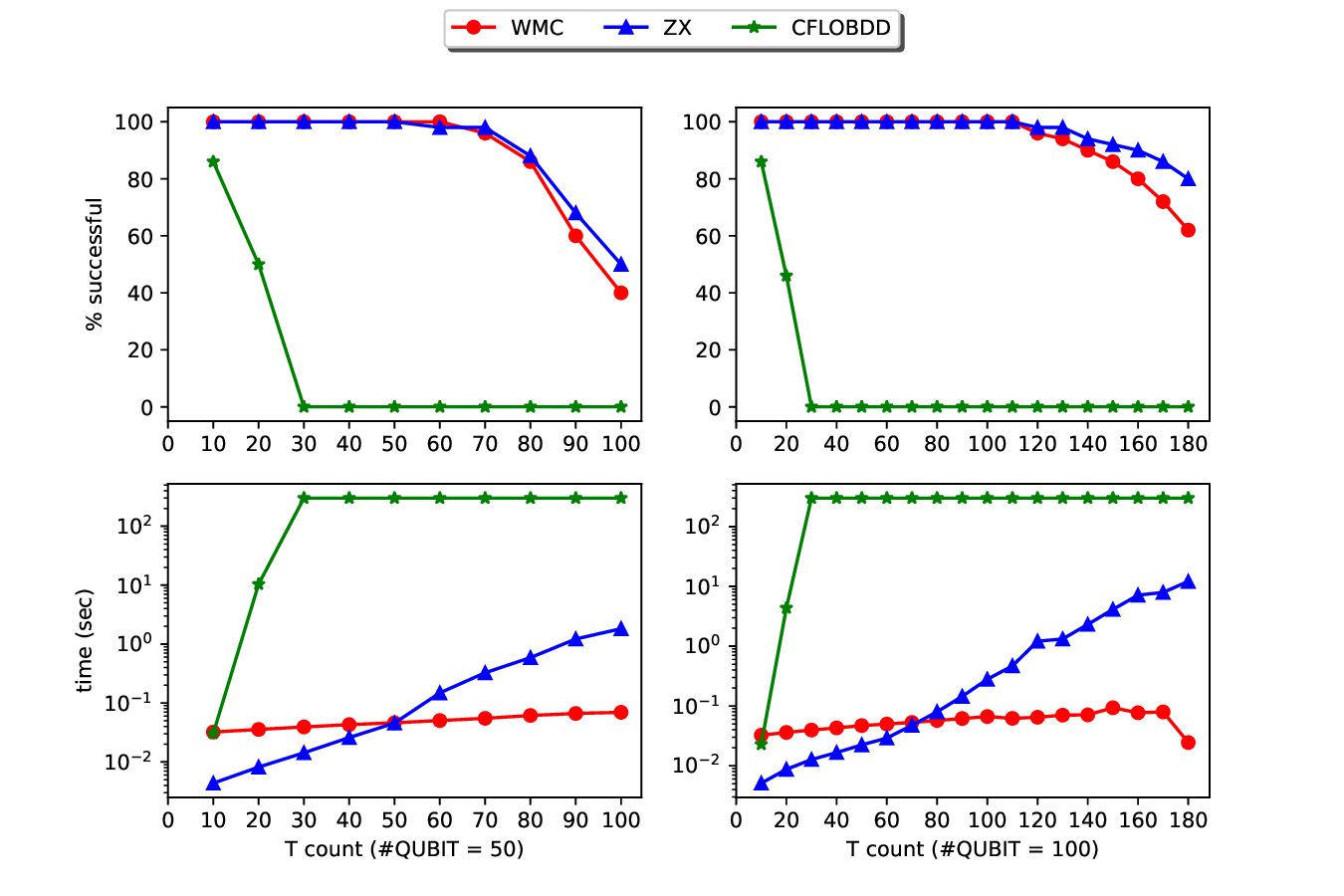}
  \caption{The upper two figures, both of which have y-axes on a logarithmic scale, are percentages of random 50- and 100-qubit circuits with increasing depth which can be successfully measured in 5 minutes. The below two figures are the minimum running time among the 50 samples for each configuration.}
  \label{fig:gadget}
\end{figure}

Second, we also consider random circuits that more resemble typical oracle implementations --- random quantum circuits with varying qubits and depths, which comprise the $CX$, $H$, $S$, and $T$ gates with appearing ratio 10\%, 35\%,  35\%,  20\%~\cite{peham2022equivalence}.
The resulting runtimes can be seen in \autoref{fig:scale}.

\begin{figure*}[t!]
  \centering
  \begin{subfigure}[t]{0.5\textwidth}
      \centering
      \includegraphics[height=1.8in]{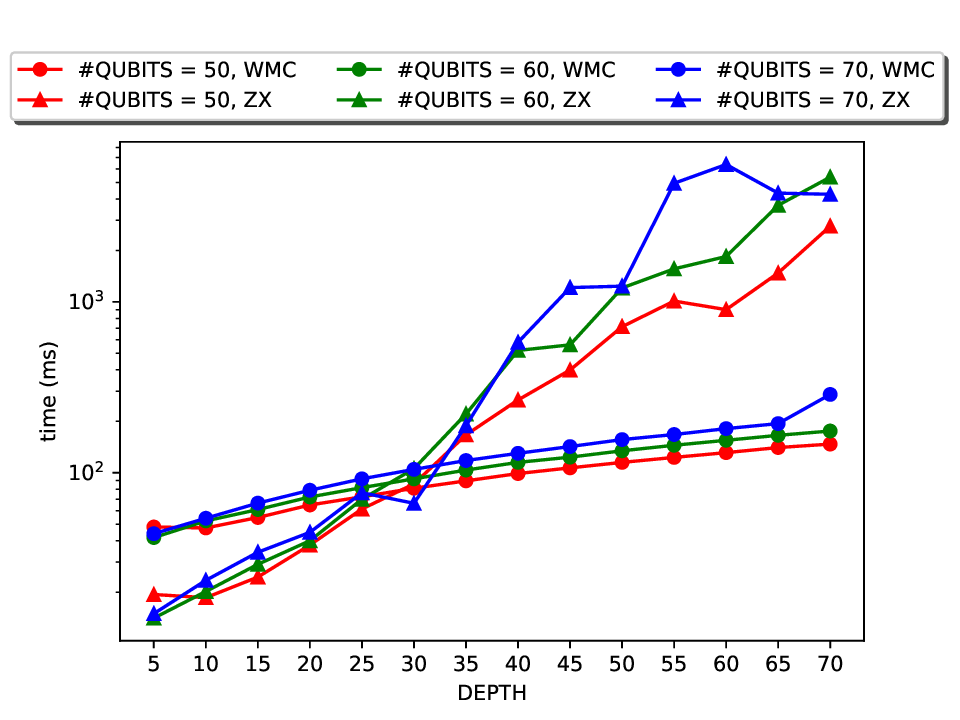}
      \caption{Runtimes for growing qubit counts}
  \end{subfigure}%
  ~ 
  \begin{subfigure}[t]{0.5\textwidth}
      \centering
      
      \includegraphics[height=1.8in]{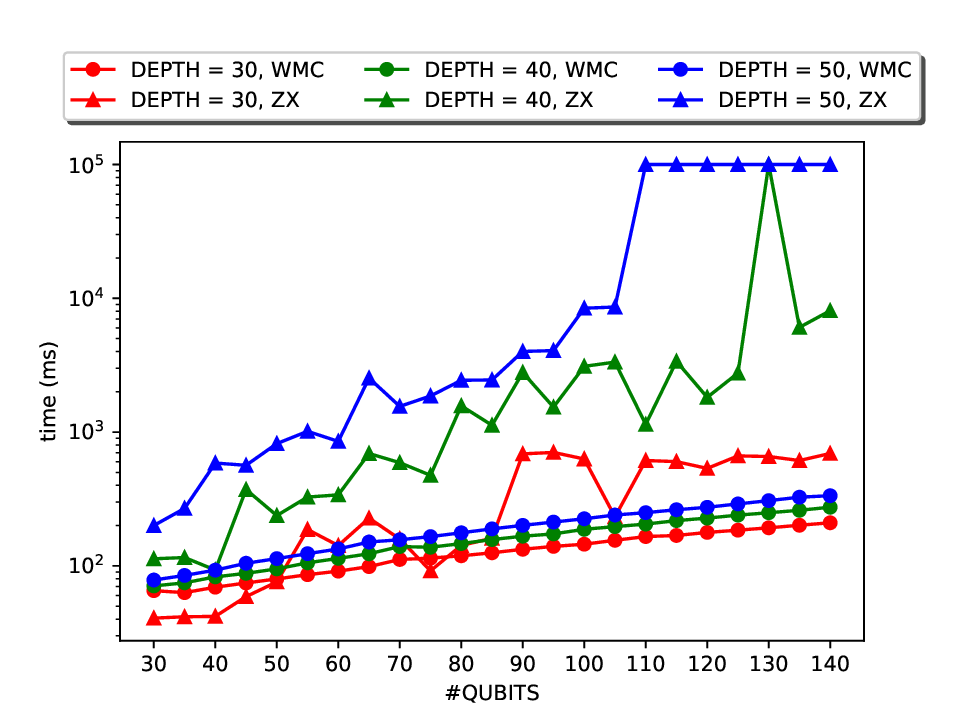}
      \caption{Runtimes for growing circuit depth}
  \end{subfigure}
   \vspace*{-1em} 
  \caption{Computational basis measurement of typical random Clifford+T circuits. (Both vertical axes are on a logarithmic scale.)
  CFLOBDD runs out of time for all benchmarks so we do not add it here.
  }
  \label{fig:scale}
\end{figure*}

In addition to random circuits, 
we empirically evaluated our method on the MQTBench benchmark set~\cite{mqt2023quetschlich}, measuring all qubits, as is typical in most quantum algorithms.
We present a representative subset of results in \autoref{tab:alg}.
The complete results can be found in \hyperref[appendix]{Appendix~\ref*{appendix}}. %
All benchmarks are expanded to the Clifford+T+R gate set,
where $R$ denotes
$\{RX,RY,RZ\}$.
The first two columns list the number of qubits $n$ and the number of gates $|G|$. 
Columns $T$ and $R$ give the number of $T$ gates and rotation gates. 
Then, the performance of the weighted model counting tool QCMC, 
the performance of the ZX-calculus tool QuiZX (ZX),
and the performance of CFLOBDD tool Quasimodo (CFLOBDD).
The performances are given by the runtime and the corresponding memory usage.

\newcommand\ERROR{crash} 
\newcommand\timeout{$>300$}

\begin{table}[h]
  \setlength{\tabcolsep}{1.5pt} %
     \caption{Results of verifying circuits from MQT bench. 
     For cases within time limit,
     we give their running time (sec) and corresponding memory usage (MB).
     We use \ding{53} \ when QuiZX does not support certain benchmarks, while $>300$ represents a timeout (5 min).
     For those benchmarks having a timeout or are not supported,
     we omit their memory usage by $\star$.
     }
     \label{tab:alg}
    \centering
    \small
    \begin{tabular}{ c | r r r r | c | c | c | c | c | c}
      \hline
      \multirowcell{2}{Algorithm} & \multirowcell{2}{n} & \multirowcell{2}{$|G|$} & \multirowcell{2}{$T$} & \multirowcell{2}{$R$} & \multicolumn{2}{c|}{WMC} &  \multicolumn{2}{c|}{ZX}  &  \multicolumn{2}{c}{CFLOBDD}  \\
      \cline{6-11}
      &  & & & & t(sec) & RSS(MB) &  t(sec) & RSS(MB)  &  t(sec) & RSS(MB)  \\
      \hline
      \multirowcell{3}{GHZ\\State}        
      & 32 & 32 & 0 & 0 & 0.044 & 12.56 & 0.007 & 12.28 &  0.03 & 354.81\\
      & 64 & 64 & 0 & 0 & 0.049 & 12.54 & 0.008 & 12.30 &  0.03 & 356.02\\
      & 128 & 128 & 0 & 0 & 0.048 & 12.56 & 0.013 & 12.44 &  0.04 & 355.77\\
      \hline
  
      \multirowcell{3}{Graph\\State}        
      & 16 & 64 & 0 & 0 & 0.046 & 12.53 & 0.005 & 12.44 & 0.06 & 355.68 \\
      & 32 & 128 & 0 & 0 & 0.045 & 12.53 & 0.008 & 12.40 & 0.11 & 355.39 \\
      & 64 & 256 & 0 & 0 & 0.045 & 12.36 & 0.015 & 12.47 & 243.22 & 6116.93\\
      \hline
  
      \multirowcell{3}{Grover's\\(no ancilla)}       
      & 4 & 162 & 8 & 58 & 0.18 & 12.53  & 89.18 & 12.06  & 0.04 & 345.56 \\
      & 5 & 470 & 0 & 195 & 10.18 & 12.44 & \timeout & $\star$ & 0.07 & 345.73  \\
      & 6 & 1314 & 0 & 552 & \timeout & $\star$ & \timeout & $\star$ & 0.21 & 346.52 \\
      \hline
      
      \multirowcell{3}{QAOA}       
      & 7 & 63 & 0 & 28 & 0.03 & 12.42 & \multirowcell{3}{\timeout} & \multirowcell{3}{$\star$} & 0.05 & 355.98 \\
      & 9 & 81 & 0 & 36 & 0.036 & 12.28 &  &  & 0.05 & 355.89 \\
      & 11 & 99 & 0 & 44 & 0.035 & 12.41 &  &  & 0.06 & 355.42 \\
      \hline
  
      \multirowcell{3}{QNN}       
      & 16 & 1119 & 0 & 400 & \multirowcell{3}{\timeout} & \multirowcell{3}{$\star$} & \multirowcell{3}{\ding{53}} & \multirowcell{3}{$\star$} & 57.36 & 2232.14 \\
      & 32 & 3775 & 0 & 1312 & & & & & \timeout & $\star$ \\
      & 64 & 13695 & 0 & 4672 & & & & & \timeout & $\star$ \\
      \hline
      
      \multirowcell{3}{Quantum\\ Walk \\ (no ancilla)} 
      & 5 &  1071 & 24 & 448 & 142.75 & 12.70 & \multirowcell{3}{\ding{53}} & \multirowcell{3}{$\star$}  & 0.13 & 345.19 \\
      & 6 & 2043  & 24 &  844 &  \timeout & $\star$ & & & 0.28 & 345.58 \\
      & 7 & 3975  & 24 & 1624 & \timeout & $\star$ & & & 0.80 & 347.58 \\   
      \hline
      
      \multirow{3}{*}{QFT} 
            & 16 & 520 & 0 & 225 & 0.05 & 12.34 & 0.09 & 12.30 & 6.58 & 516.19\\
            & 32 & 2064 & 0 & 961 & 0.11 & 12.34 & 0.16 & 12.53 & \timeout & $\star$ \\
            & 64 & 8224 & 0 & 3969 & 0.37 & 12.44 & 0.57 & 12.32 & \timeout & $\star$ \\ 
      \hline
      \multirowcell{3}{VQE}        
      & 14 & 236 & 0 & 82 & 0.23 & 12.28 & \multirowcell{3}{\ding{53}} & \multirowcell{3}{$\star$}  & 2.84 & 610.54 \\
      & 15 & 253 & 0 & 88 & 0.49 & 12.14 & & & 6.29 & 680.64\\
      & 16 & 270 & 0 & 94 & 0.51 & 12.50 & & & 17.97 & 943.67 \\
      \hline
      \multirowcell{3}{W-state}        
      & 32 & 435 & 0 & 124 & 0.11 & 12.59 & \multirowcell{3}{\ding{53}} & \multirowcell{3}{$\star$} & \multirowcell{3}{\timeout} & \multirowcell{3}{$\star$} \\
      & 64 & 883 & 0 & 252 & 0.28 & 12.62 & & &  & \\
      & 128 & 1779 & 0 & 508 & 0.66 & 12.63 & & &  &  \\
      \hline
     \end{tabular}
      \vspace*{-2.5em}
  \end{table}

\vspace{-1em}
\paragraph{Discussion.}

For all cases, QuiZX gives algebraic answers while the QCMC and CFLOBDD methods give numerical answers.
Because of the imprecision of floating-point arithmetic,
we consider $a$ equal to $b$ if $|a-b|<10^{-8}$.
With equality tolerance,
all three methods produce the same answers.

For random circuits,
\autoref{fig:gadget} illustrates that the minimum runtime barely increases for QCMC, while it seems exponential for QuiZX (note the log scale).
However,  when the number of qubits  is $n = 50$ and the $T$ count is larger than $70$,
or when, $n = 100$ and the $T$ count is larger than 110,
QuiZX has a better success rate, i.e., it completes more simulations than QCMC in 5 minutes.
In contrast, CFLOBDD exhibits the lowest success rate among the three methods. 
When it comes to a typical random Clifford+T circuit 
\autoref{fig:scale} shows that the runtime of both QCMC and ZX exhibits a clear correlation with the size of the circuits,
while CFLOBDD can not solve all benchmarks in 5 minutes.
The proposed implementation consistently outperforms QuiZX by one to three orders of magnitude especially when the size is getting larger (note again the log scale).
However, the story changes when considering structural quantum circuits.

For MQT benchmarks,
\autoref{tab:alg} shows that QCMC performs better than QuiZX except for GHZ state and Graph State where QCMC is
slightly slower in milliseconds.
CFLOBDD significantly surpasses QCMC on Grover and quantum walk algorithms,
primarily due to the decision diagram-based method's proficiency in handling circuits featuring large reversible parts and oracles.
While for those circuits featuring a large number of rotation gates with various rotation angles, 
like Graph state, QFT, and VQE, QCMC demonstrates clear advantages. 
This distinction arises from the fact that when dealing with rotation gates,
it might happen that two decision diagram nodes that should be identical in theory, differ by a small margin in practice, obstructing node merging~\cite{9023381}. 
In contrast, the WMC approach ---also numerical in nature--- avoids explicit representation of all satisfying assignments, by iteratively computing a sum of products. This not only avoids blowups in space use but, we hypothesize, also avoids numerical instability, a problem that has plagued numerical decision-diagram based approaches~\cite{peham2022equivalence,9023381}. 
In terms of memory usage,
CFLOBDD always uses more than 340 MB,
in some cases uses  more than 6 GB (graph state, $n = 64$),
while QCMC and QuiZX use less than 13 MB (OS reported peak resident set size).

Overall,  both QCMC and QuiZX outperform CFLOBDD in handling random circuits.
Moreover, QCMC has better runtime performance than QuiZX.
For structural circuits,
QuiZX faces a limitation as it does not efficiently support rotation gates with arbitrary angles, 
so it is incapable of simulating many quantum algorithms, like VQE, directly.
In terms of runtime,
CFLOBDD is better at circuits featuring structure,
while QCMC performs better at circuits with arbitrary rotation gates.
However, CFLOBDD has a significantly higher memory cost compared to both QCMC and QuiZX.
 
\section{Related Work}
\label{sec:related}

In this section, we give an overview of the related work on classical simulation of quantum computing with a focus on those methods applying SAT-based solvers.

SAT-based solvers have proven successful in navigating the huge search spaces encountered in various problems in quantum computing~\cite{robert2011atpg,meuli2018sat},%
initial attempts have been made to harness the strengths of satisfiability solvers for the simulation of quantum circuits.
For instance,
\cite{berent2022towards} implements a simulator for Clifford circuits based on a SAT encoding (our encoding of $H, S, CX$ in \autoref{cons:clifford} is similar to theirs).
The authors also discuss a SAT encoding for universal quantum circuits, which however requires exponentially large representations, making it impractical.
Besides quantum circuits, \cite{Bauer2023symQV} presents symQV, a framework for verifying quantum programs in a quantum circuit model,
which allows to encode the verification problem in an SMT formula,
which can then be checked with a $\delta$-complete decision process. There is an SMT theory for quantum computing~\cite{chen2023theory}.

Another method is based on decision diagrams (DDs)~\cite{akers,bryant86},
which represent many Boolean functions succinctly, while allowing manipulation operations without decompression.
DD methods for pseudo-Boolean functions include Algebraic DDs (ADD)~\cite{bahar1993algebraic,clarke1993spectral,viamontes2003improving} and various ``edge-valued'' ADDs~\cite{lai1994evbdd,tafertshofer1997factored,wilson2005decision,sanner2005AffineADDs}.
The application of DDs to quantum circuit simulation,
by viewing a quantum state as a pseudo-Boolean function, was pioneered with QuiDDs~\cite{viamontes2004high}
and further developed with Quantum Multi-valued DDs~\cite{miller2006qmdd}, Tensor DDs~\cite{hong2021tensor} and CFLOBDDs~\cite{sistla2023weighted}. %
All but CFLOBDD are essentially ADDs with complex numbers.

Another way is to translate quantum circuits into ZX-diagrams~\cite{coecke2011interactingZXAlgebra}, 
which is a graphical calculus for quantum circuits equipped with powerful rewrite rules.

Classical simulation is commonly used for the verification of quantum circuits, with extensive research focused on their equivalence checking~\cite{wang2022equivalence,al2014verification,al2013equivalence}. 
It can also be applied to bug hunting in quantum circuits.
In \cite{chen2023automata},
the authors proposed a tree automata to compactly  represent quantum states and gates algebraically, framing the verification problem as a Hoare triple.

\section{Conclusions}

In this work, we propose a generalized stabilizer formalism formulated in terms of a stabilizer group.
Based on this, we provide an encoding for various universal gate sets as a weighted model counting problem with only negative weights, obviating the need for complex numbers that are not supported by existing WMC tools. 
Besides $T$ gates,
we also extend our encoding to general rotation gates.
Furthermore, we demonstrate how to perform computational basis measurements using this encoding, enabling strong quantum circuit simulation.

We have implemented our method
in an open source tool QCMC.
To give empirical results on the practicality of our method,
we applied it to a variety of benchmarks comparing one based on ZX-calculus and one based on decision diagrams.
Experimental results show that our approach outperforms it in most cases, particularly with circuits of large sizes.
The performance of our approach is quite different from the other approaches, demonstrating the unique potential of WMC in various use cases. It will be interesting to see whether WMC tools can improve up on this should these new benchmarks be included in the WMC competition.

This work provides a new benchmark paradigm for weighted model counting problems.
In the future, it will be interesting to apply our encoding to approximate weighted model counting~\cite{chakraborty2014distribution,ermon2013embed}.
Moreover, using weighted samplers~\cite{meel2022inc,golia2021designing}, we could realize weak circuit simulation using the same encoding.
The main obstacle now is to allow negative weights in approximate weighted model counters and samplers.
Additionally, it will also be interesting to explore more applications of this encoding, such as checking the equivalence of two quantum circuits and entanglement purification.

\newpage
\bibliographystyle{plain}
\bibliography{lit} 

\newpage

\appendix
\section{Appendix}\label{appendix}
Here we show the results of remaining benchmarks in MQT bench,
where \autoref{tab:fullres} and \autoref{tab:fullres2} shows results of simulating measurements on scalable benchmarks and non-scalable benchmarks respectively~\cite{mqt2023quetschlich}.
 
  \begin{table}[!h]
    \caption{Results of verifying scalable Benchmarks in MQTbench.}
    \label{tab:fullres}
    \setlength{\tabcolsep}{1pt} 
    \scriptsize
    \begin{tabular}{ c | r r r r | c | c | c | c | c | c}
      \hline
      \multirowcell{2}{Algorithm} & \multirowcell{2}{n} & \multirowcell{2}{$|G|$} & \multirowcell{2}{$T$} & \multirowcell{2}{$R$} & \multicolumn{2}{c|}{WMC} &  \multicolumn{2}{c|}{ZX}  &  \multicolumn{2}{c}{CFLOBDD}  \\
      \cline{6-11}
      &  & & & & t(sec) & RSS(MB) &  t(sec) & RSS(MB)  &  t(sec) & RSS(MB)  \\
      \hline
        \multirowcell{4}{Amplitude\\Estimation}        
        & 16 & 725 & 0 & 390 & \multirowcell{4}{\timeout} & \multirowcell{4}{$\star$}  & \timeout & \multirowcell{4}{$\star$} & \multirowcell{4}{\timeout} & \multirowcell{4}{$\star$}\\
        & 32 & 2485 & 0 & 1297 & & & \timeout &  & &  \\
        & 64 & 9077 & 0 & 4642 & & & \timeout & & & \\
        & 128 & 34549 & 0 & 17467 & & & \ERROR & & & \\
        \hline
      \multirowcell{3}{Deutsch\\Jozsa}        
      & 16 & 160 & 0 & 56 & 0.05 & 12.56 & 0.008 & 12.45  & 0.04 & 355.92 \\
      & 32 & 470 & 0 & 195 & 0.05 & 12.59 & 0.012 & 12.69 &  0.04 & 355.43\\
      & 64 & 1314 & 0 & 552 & 0.07 & 12.55 & 0.022 & 12.44 & 0.07 & 355.54\\
      & 128 & 3319 & 0 & 1344 & 0.1 & 12.67 & 0.04 & 12.47 & 0.1 & 356.05\\

      \hline 
        \multirowcell{4}{Grover's\\(v-chain)}       
        & 5 & 458 & 24 & 165 & 23.41 & 12.75 & 0.05 & 12.43 & 0.07 & 355.91 \\
        & 7 & 1058 & 48 & 388 & \timeout & $\star$ & 0.28 & 12.53 & 0.11 & 355.72 \\
        & 9 & 1314 & 0 & 552 & \timeout & $\star$ & 1.19 & 12.48 & 0.32 & 365.06\\
        & 11 & 3319 & 0 & 1344 & \timeout & $\star$ & 4.86 & 12.43 & 1.10 & 386.73 \\
        \hline
    
        \multirowcell{3}{Portfolio \\ Optimization \\ with QAOA}       
        & 7 & 315 & 0 & 161 & 41.17 & 12.64 & \multirowcell{3}{\ding{53}} & \multirowcell{3}{$\star$}  & 0.12 & 355.64 \\
        & 9 & 470 & 0 & 195 & \timeout & $\star$ & & &  0.32 & 360.84\\
        & 11 & 1314 & 0 & 552 & \timeout & $\star$ & & & 1.20 & 370.64\\
        \hline
    
        \multirowcell{3}{Portfolio \\ Optimization \\ with VQE}       
        & 4 & 134 & 8 & 32 &  0.09 & 12.58 & \multirowcell{3}{\ding{53}} & \multirowcell{3}{$\star$} & 0.062 & 355.53 \\
        & 8 & 412 & 0 & 64 & 19.83 & 12.58 &  & &  0.17 & 356.44 \\
        & 16 & 1400 & 0 & 128 & \timeout & $\star$  &  & & 84.2 & 1069.14 \\
        \hline
        \multirowcell{3}{QPE\\exact}       
        & 16 & 1119 & 0 & 400 & \multirowcell{3}{\timeout} & \multirowcell{3}{$\star$} & \multirowcell{3}{\ding{53}} & \multirowcell{3}{$\star$} & 5.39 & 638.031 \\
        & 32 & 3775 & 0 & 1312 &   & & & & \timeout & $\star$ \\
        & 64 & 13695 & 0 & 4672 &  & & & & \timeout & $\star$ \\
        \hline
    
        \multirowcell{3}{QPE\\inexact}       
        & 16 & 532 & 0 & 225 & \multirowcell{3}{\timeout} & \multirowcell{3}{$\star$} & \multirowcell{3}{\ding{53}} & \multirowcell{3}{$\star$} & 5.61 & 635.56 \\
        & 32 & 3775 & 0 & 1312  &   & & & &  \timeout & $\star$\\
        & 64 & 13695 & 0 & 4672  &   & & & & \timeout & $\star$\\
        \hline
    
        \multirowcell{3}{Quantum\\ Walk \\ (v-chain)} 
        & 5 &  325 & 72 & 64 & 6.88 & 12.58 & 0.038 & 12.45 & 0.04 & 355.43\\
        & 11 & 1585  & 360 & 344 & \timeout & $\star$& 0.14 & 12.48  & 0.26 & 355.73 \\
        & 17 & 3975  & 24 & 1624 & \timeout & $\star$& 0.30 & 12.45 & 0.26 & 367.69\\   
        \hline
        \multirowcell{3}{Real Amplitudes \\ ansatz with \\Random \\ Parameters}        
        & 16 & 680 & 0 & 128 & \multirowcell{4}{\timeout} & \multirowcell{4}{$\star$}  & \multirowcell{4}{\timeout} & \multirowcell{4}{$\star$} & \multirowcell{4}{\timeout} & \multirowcell{4}{$\star$} \\
        & 32 & 2128 & 0 & 256 &   & & & &  & \\
        & 64 & 7328 & 0 & 512 &   & & & &  & \\
        & 128 & 26944 & 0 & 1024 &   & & & &  & \\
        \hline
    
        \multirowcell{3}{Efficient SU2 \\ ansatz with \\ Random Parameters}        
        & 16 & 680 & 0 & 289 & \multirowcell{3}{\timeout} & \multirowcell{3}{$\star$} & \multirowcell{3}{\timeout} & \multirowcell{3}{$\star$} &53.58 & 1114.42  \\
        & 32 & 2128 & 0 & 576 &   & & & & \timeout & $\star$\\
        & 64 & 7328 & 0 & 1141 &   & & & & \timeout & $\star$\\
        \hline
        \multirowcell{4}{Two Local ansatz \\ with Random \\ Parameters}        
        & 16 & 680 & 0 & 128 & \multirowcell{4}{\timeout} & \multirowcell{4}{$\star$} & \multirowcell{4}{\ding{53}} & \multirowcell{4}{$\star$} & 42.61 & 1120.14 \\
        & 32 & 2128 & 0 & 256 &   & & & & \timeout & $\star$ \\
        & 64 & 7328 & 0 & 512 &   & & & & \timeout & $\star$ \\
        & 128 & 26944 & 0 & 1024 &   & & & & \timeout & $\star$ \\
        \hline
    
       \end{tabular}
    \end{table}
    
    \begin{table}[!h]
    \setlength{\tabcolsep}{1pt} 
    \caption{Results of verifying non-scalable Benchmarks in MQTbench.}
    \label{tab:fullres2}
      \centering
      \scriptsize
      \begin{tabular}{ c | r r r r | c | c | c | c | c | c}
        \hline
        \multirowcell{2}{Algorithm} & \multirowcell{2}{n} & \multirowcell{2}{$|G|$} & \multirowcell{2}{$T$} & \multirowcell{2}{$R$} & \multicolumn{2}{c|}{WMC} &  \multicolumn{2}{c|}{ZX}  &  \multicolumn{2}{c}{CFLOBDD}  \\
        \cline{6-11}
        &  & & & & time(sec) & RSS(MB) &  time(sec) & RSS(MB)  &  time(sec) & RSS(MB)  \\
        \hline
    \multirowcell{3}{Ground\\State}        
        & 4 & 120 & 0 & 48 & 0.075 & 12.58 & \multirowcell{3}{\ding{53}} & \multirowcell{3}{$\star$} & 0.05 & 355.89 \\
        & 12 & 648 & 0 & 144 & \timeout & $\star$ & & & 1.99 & 393.02\\
        & 14 & 840 & 0 & 168 & \timeout & $\star$ & & & 10.60 & 674.30\\
        \hline
        \multirowcell{4}{Pricing\\ Call}       
        & 7 & 339 & 48 & 127 & \multirowcell{4}{\timeout}  & \multirowcell{4}{$\star$} & \multirowcell{4}{\ding{53}} & \multirowcell{4}{$\star$} & 0.06 & 355.95 \\
        & 11 & 727 & 96 & 255 &   & & & &  0.11 & 355.48 \\
        & 15 & 1547 & 144 & 527 &   & & & &  0.32 & 362.28 \\
        & 19 & 4101 & 192 & 1375&   & & & &  2.36 & 386.5\\
        \hline
        \multirowcell{4}{Pricing\\ Put}       
        & 7 & 341 & 48 & 127 & \multirowcell{4}{\timeout} & \multirowcell{4}{$\star$} & \multirowcell{4}{\ding{53}} & \multirowcell{4}{$\star$} &0.06 & 355.95\\
        & 11 & 733 & 96 & 255 &   & & & &   0.11 & 355.55\\
        & 15 & 1565 & 144 & 527 &   & & & &  0.32 & 360.30 \\
        & 19 & 4119 & 192 & 1375 &   & & & & 2.38 & 387.09 \\
        \hline
        \multirowcell{3}{Routing}       
        & 2 & 43 & 0 & 16 & 0.05 & 12.55 & \multirowcell{3}{\ding{53}} & \multirowcell{3}{$\star$} & 0.03 & 346.30\\
        & 6 & 135 & 96 & 48 & 0.34 & 12.39 &  & &  0.05 & 344.56 \\
        & 12 & 135 & 96 & 48 & 60.51 & 12.63 & & &  1.40 & 368.39 \\
        \hline
        \multirowcell{2}{Shor}       
        & 18 & 29935 & 128 & 11281 & \multirowcell{2}{\timeout} & \multirowcell{2}{$\star$} & \multirowcell{2}{\timeout} & \multirowcell{2}{$\star$} & \multirowcell{2}{\timeout} & \multirowcell{2}{$\star$} \\
        & 18 & 29935 & 128 & 11185 &   & & & &  &  \\
        \hline
        \multirowcell{3}{Travelling\\Salesman}       
        & 4 & 165 & 0 & 48 & 0.13 & 12.55 & \multirowcell{3}{\ding{53}} & \multirowcell{3}{$\star$} & 0.04 & 354.64 \\
        & 9 & 390 & 0 & 108 & 2.77 & 12.59 & & & 0.2 & 362.59 \\
        & 16 & 705 & 0 & 192& 20.18 & 12.61 & & & 27.73 & 687.38 \\
        \hline
       \end{tabular}
    \end{table}

\end{document}